\author{Christel Baier}{Technische Universit\"at Dresden, Germany}{christel.baier@tu-dresden.de}{0000-0002-5321-9343}{}
\author{Jakob Piribauer}{Technische Universit\"at Dresden, Germany; Universit\"at Leipzig, Germany}{jakob.piribauer@tu-dresden.de}{0000-0003-4829-0476}{}
\author{Maximilian Starke}{Technische Universit\"at Dresden, Germany}{}{}{}
\keywords{Markov decision processes, risk-aversion, deviation measures, total reward}
\pgfplotsset{width=10cm,compat=1.9}
    \newcolumntype{L}{>{\raggedright\arraybackslash}X}
    \newcolumntype{R}{>{\raggedleft\arraybackslash}X}
\newcommand{\rawdiaplus}{%
  \begin{tikzpicture}
    \useasboundingbox (-0.7ex, -0.9ex) rectangle (0.7ex, 0.9ex);
    \node (w) at (-0.7ex,0) {};
    \node (e) at (+0.7ex,0) {};
    \node (s) at (0,-0.9ex) {};
    \node (n) at (0,+0.9ex) {};
    \draw (n.center) -- (e.center) -- (s.center) -- (w.center) -- (n.center);
    \draw (n.center) -- (s.center);
    \draw (e.center) -- (w.center);
  \end{tikzpicture}}
\newsavebox{\diamondplusbox}
\savebox{\diamondplusbox}{\rawdiaplus}
\newcommand{\rawdiaminus}{%
  \begin{tikzpicture}
    \useasboundingbox (-0.7ex, -0.9ex) rectangle (0.7ex, 0.9ex);
    \node (w) at (-0.7ex,0) {};
    \node (e) at (+0.7ex,0) {};
    \node (s) at (0,-0.9ex) {};
    \node (n) at (0,+0.9ex) {};
    \draw (n.center) -- (e.center) -- (s.center) -- (w.center) -- (n.center);
    \draw (e.center) -- (w.center);
  \end{tikzpicture}}
\newsavebox{\diamondminusbox}
\savebox{\diamondminusbox}{\rawdiaminus}
\newcommand{\MAD}{\mathbb{MAD}}
\newcommand{\SV}{\mathbb{SV}}
\newcommand{\SVPE}{\mathbb{SVPE}}
\newcommand{\SMAD}{\mathbb{SMAD}}
\newcommand{\MADPE}{\mathbb{MADPE}}
\newcommand{\Var}{\mathbb{V}}
\newcommand{\ERMin}{ERMin}
\newcommand{\ERMax}{ERMax}
\newcommand{\TBP}{\mathit{TBP}}
\newcommand{\cE}{\mathcal{E}}
\newcommand{\cM}{\mathcal{M}}
\newcommand{\cN}{\mathcal{N}}
\newcommand{\eqdef}{\ensuremath{\stackrel{\text{\tiny def}}{=}}}
\renewcommand{\Pr}{\mathrm{Pr}}
\newcommand{\sinit}{s_{\mathit{\scriptscriptstyle init}}}
\newcommand{\Act}{\mathit{Act}}
\newcommand{\act}{\alpha}
\newcommand{\fpath}{\pi}
\newcommand{\last}{\mathit{last}}
\newcommand{\Cyl}{\mathit{Cyl}}
\newcommand{\Sched}{\mathit{Sched}}
\newcommand{\sched}{\mathfrak{S}}
\newcommand{\tsched}{\mathfrak{T}}
\newcommand{\rew}{\mathit{rew}}
\newcommand{\goal}{\mathit{goal}}
\newcommand{\Rational}{\mathbb{Q}}
\newcommand{\CiteAppendix}[1]{}
\newtheorem{assumption}{Assumption}{\bfseries}{\itshape}
\begin{document}
\title{Risk-averse optimization of total rewards in Markovian models using deviation measures}
\titlerunning{Risk-averse optimization of total rewards in Markovian models}

\authorrunning{Christel Baier, Jakob Piribauer, Maximilian Starke}
\maketitle              

\begin{abstract}
This paper addresses objectives tailored to the  \emph{risk-averse} optimization of accumulated rewards in Markov decision processes (MDPs).
The studied objectives require maximizing the expected value of the accumulated rewards minus a penalty factor times a deviation measure of the resulting distribution of rewards.
Using the variance in this penalty mechanism leads to the variance-penalized expectation (VPE) for which it is known that optimal schedulers have to minimize future expected rewards when a high amount of rewards has been accumulated.
This behavior is undesirable as risk-averse behavior should keep the probability of particularly low outcomes low, but not discourage the accumulation of additional rewards on already good executions.

The paper investigates the semi-variance, which only takes outcomes below the expected value into account, the mean absolute deviation (MAD), and the semi-MAD as alternative deviation measures.
Furthermore, a penalty mechanism that penalizes outcomes below a fixed threshold  is studied.
For all of these objectives, the properties of optimal schedulers are specified and in particular the question whether these objectives overcome the problem observed for the VPE is answered. 
Further, the resulting algorithmic problems on MDPs and Markov chains are investigated.
\end{abstract}

\section{Introduction}

Markov decision processes (MDPs) are a prominent  model for systems 
whose behavior is subject to  \emph{non-determinism} and  \emph{probabilism}. 
Non-deterministic behavior might arise, e.g., if a system is employed in an unknown environment, 
can be controlled by a user, or  works concurrently.
On the other hand, if, e.g., sufficiently much data on the failure of  components is available or randomized algorithms make use of randomization explicitly, 
it is reasonable to model these aspects of the system as probabilistic.

In order to model quantitative aspects of a system, such as energy consumption, execution time, or utility, MDPs are often equipped  with a  \emph{reward function} that specifies how much  reward is received in each step of an execution. 
A typical task is then to resolve the non-deterministic choices by specifying a \emph{scheduler}, a.k.a. \emph{policy}, such that the expected value of the total accumulated  reward is maximal (or minimal). 
In verification, such optimization problems naturally occur when investigating the worst- or best-case expected value of the accumulated  reward where worst- and best-case range over all resolutions of the non-deterministic choices.
If additionally a target state has to be reached almost surely, this problem is known as the \emph{stochastic shortest path problem} \cite{Bertsekas1991AnAO,deAlf99}.

\paragraph*{Risk-averse optimization.}
If the objective is the maximization of  the expected value of the accumulated rewards, all other aspects of the probability distribution of accumulated rewards are disregarded.
This might lead to undesirable behavior as the optimal scheduler might receive low rewards with high probability as long as the expected value is optimal. In many situations, however, 
a slightly lower expected reward is preferable if it is obtained by a more ``stable'' behavior in which the risk of encountering low rewards is reduced. E.g., in a traffic control scenarios, it might be  important to reduce the risk of congestions while ensuring a reasonable average throughput instead of solely optimizing the average throughput. 

In order to define objectives incentivizing such risk-averse behavior, it is worth taking a look
at finance and in particular portfolio optimization.
Here,  Markowitz proclaimed  that a portfolio of financial positions should be chosen such that it is Pareto optimal with respect to the expected return and the variance of the return \cite{markowitz52}. 
One way extensively studied in finance to obtain Pareto optimal portfolios is to maximize the \emph{variance-penalized expectation (VPE)}, which is the expected value minus a penalty factor $\lambda$ times the variance. 
The parameter $\lambda$ can be used to obtain different levels of risk-aversion.

Besides the variance, further deviation measures have been investigated to reduce risk in portfolio optimization:
The use of the \emph{semi-variance}, which -- in contrast to the variance -- only takes the deviation of outcomes below the expected value into account, as a penalty mechanism  has been 
introduced in this context by Markowitz  \cite{markowitz59}. 
Furthermore, instead of considering quadratic deviations from the expected value as in the case of variance and semi-variance, the \emph{mean absolute deviation (MAD)} can be used to obtain  the MAD-penalized expectation (MADPE)
studied for portfolio optimization in \cite{Konno1991}.
The MAD measures the expected absolute deviation from the expected value. 
 
 In this paper, we investigate these different deviation measure based penalty mechanisms in the context of the maximization of rewards in MDPs.

\paragraph*{Variance-penalized expectation in MDPs (VPE).}
Recently, the maximization of the VPE of accumulated rewards in MDPs was studied in \cite{Piribauer2022TheVS}:
On the positive side, it is shown  that optimal schedulers for the VPE can be chosen to be deterministic finite-memory schedulers. 
Nevertheless, the optimization of the VPE is shown to be computationally hard: The threshold problem whether the optimal VPE exceeds a given threshold $\vartheta$ is EXPTIME-hard. An optimal scheduler can be computed in exponential space.

A main drawback of the VPE, however, is of conceptual nature: In  \cite{Piribauer2022TheVS}, it is shown that VPE-optimal schedulers have to \emph{minimize} the future expected rewards as soon as a high amount of rewards (above a computable bound $B$) has been accumulated. We call such schedulers \emph{eventually reward-minimizing schedulers (\ERMin-schedulers)}. 
Intuitively, the reason is that a further accumulation of additional rewards after a high amount of rewards has already been accumulated has a stronger effect on the variance than on the expected value due to the quadratic nature of the variance. Conceptually, this can be considered to be a flaw in the use of the VPE as an objective  to yield  risk-averse behavior.

The desired behaviour a suitable objective should induce is that a scheduler  achieves a high expected accumulated reward, while keeping the probability of particularly bad outcomes low. 
Improving on already good outcomes should not have a negative effect. So, we want optimal schedulers to be \emph{eventually reward-maximizing (\ERMax-schedulers)}, i.e., that they maximize the  expected reward once the accumulated reward exceeds some bound $B$. 

\paragraph*{Deviation-measure-penalized expectation.}
Towards this goal, 
we investigate objectives in the spirit of the VPE, which are of the form 
$
\mathbb{E}^\sched(\rew) - \lambda \mathbb{DEV}^\sched(\rew)
$
where a penalty factor $\lambda$ times a deviation measure $\mathbb{DEV}^\sched(\rew)$ of the probability distribution of accumulated rewards under a scheduler $\sched$ is subtracted from the expected accumulated reward $\mathbb{E}^\sched (\rew)$.

The first  deviation measure we investigate is the MAD. In contrast to the variance, the contribution of an outcome to the MAD only grows linearly with its distance to the expected value.
For the MAD and the variance, we also study one-sided variants in which only outcomes below the expected value are considered: The semi-MAD (SMAD) and semi-variance quantify the average absolute or squared deviation below the expected value by assigning deviation $0$ to all outcomes above the expected value.
 Finally, we investigate a simpler alternative to the MADPE: Instead of measuring the deviation from the expected value of accumulated rewards, which itself depends on the chosen scheduler,
we consider a threshold-based penalized expectation (TBPE), where outcomes below a threshold $t$ that can be chosen externally  are penalized either linearly or according to more complicated functions.

\begin{table}[t]
\begin{tabularx}{\textwidth}{L| L| L| L}
&hardness of threshold problem& computation of optimum & optimal schedulers \\
\hline
VPE \cite{Piribauer2022TheVS} & EXPTIME-hard;  in P for Markov chains & in exponential space & deterministic, finite-memory \ERMin-schedulers \\
\hline
SVPE & -  & -  & randomized, \ERMin-schedulers can be necessary \\
\hline
MADPE ($\lambda \leq 1/2$), SMADPE ($\lambda \leq 1$) & PP-hard for acyclic Markov chains  & quadratic program of exponential size & randomized, finite-memory \ERMax-schedulers \\
\hline
MADPE ($\lambda > 1/2$), SMADPE ($\lambda > 1$)  & - & - & randomized, \ERMin-schedulers can be necessary\\
\hline
TBPE &PP-hard for acyclic Markov chains  & in pseudo-polynomial time & deterministic, finite-memory \ERMax-schedulers \\
\hline
\end{tabularx}
\vspace{6pt}
\caption{Overview of the complexity results and the types of schedulers needed for the optimization of the studied objectives and the VPE. The entries ``-'' indicate that the problem was not studied further as the scheduler needed for the optimization are the undersirable \ERMin-schedulers.}
\label{tbl:overview}
\vspace{-24pt}
\end{table}

\paragraph*{Contributions.}
The main contributions, also summarized  in Table \ref{tbl:overview}, are as follows. 
\begin{itemize}
\item
We show that  optimal schedulers for the MADPE can be chosen to be \ERMax-schedulers, as desired, if the  risk-aversion parameter $\lambda$ is sufficiently small, i.e. if $\lambda \leq 1/2$.
This bound on the parameter is shown to be tight. 
Furthermore, we show that randomized schedulers are necessary for the optimization.

We formulate the optimization problem as a quadratic program and obtain a EXPSPACE-upper complexity bound for the threshold problem for the MADPE.
On the other hand, we show that already in acyclic Markov chains the threshold problems for the MADPE and the MAD are PP-hard 
under polynomial-time Turing reductions.

As the semi-MAD is  always half of the MAD, the results transfer to the semi-MADPE.
\item
We 
investigate the semivariance-penalized expectation (SVPE) and show -- somewhat surprisingly -- that,  for any risk-aversion parameter $\lambda$, there are MDPs in which optimal schedulers are \ERMin-schedulers. Hence, the SVPE as objective does not overcome the undesirable effects observed for the VPE. Furthermore, we show that, in contrast to the VPE, randomization is necessary for the optimization of the SVPE.
\item
We show that the TBPE can be optimized in pseudo-polynomial time and that
  deciding if the TBPE exceeds a  bound 
for  linear penalty functions even in acylic Markov chains is PP-hard under polynomial-time Turing reductions.
\end{itemize}
As a proof-of-concept, we  analyze our algorithms for the optimization of the MADPE and for the TBPE in a small series of experiments.

\paragraph*{Related work.}
The above mentioned work on the VPE for accumulated rewards in MDPs \cite{Piribauer2022TheVS} is the closest related work to our paper.
Earlier work on the VPE in MDPs addressed the finite-horizon  setting with terminal rewards  \cite{collins1997finite} or applied the notion to mean payoff and discounted rewards \cite{filar1989variance}.
Further, \cite{xia2020risk} presents a policy iteration algorithm converging against \emph{local} optima for a similar measure.
The computation of the variance of accumulated rewards has been studied in Markov chains \cite{verhoeff2004reward} and in MDPs \cite{mandl1971variance,MannorTsitsiklis2011}.
In \cite{DBLP:journals/jcss/BrazdilCFK17}, the satisfiability of constraints on the expected mean payoff in conjunction with constraints on the variance or related notions such as a local variability are studied for MDPs.

For MDPs, the 
SVPE of random variables defined in terms of the stationary distribution has been studied via the use of reinforcement learning algorithms \cite{DBLP:journals/jair/MaMXZ22}. 
Conceptually and methodologically this work is nevertheless not closely related to our work. 
We are not aware of investigations of the MADPE on MDPs.

Furthermore, several approaches to formalize various other {risk-averse} optimization problems for accumulated rewards in MDPs have been proposed and studied in the literature.
This includes the computation of worst- or best-case quantiles  \cite{UB13,DBLP:conf/nfm/BaierDDKK14,HaaseKiefer15,DBLP:journals/fmsd/RandourRS17}, also called \emph{values-at-risk}:
Given a probability $p$, quantiles on the accumulated rewards are the best bound $C$ such that the accumulated rewards stays below $C$ with probability at  most $p$ under all or under some scheduler. 
While quantiles still disregard the distribution below, 
the \emph{conditional value-at-risk} and the \emph{entropic value-at-risk} are more involved measures that quantify how far the probability mass of  the tail of the probability distribution lies below a given quantile.
In the context of risk-averse optimization in MDPs, these measures  have been studied in \cite{kretinsky2018} and  \cite{ahmadi2021}.
A further approach, the \emph{entropic risk} measure, reweighs outcomes by an exponential utility function. Optimizing this entropic risk measure leads to schedulers that tend to still achieve a high expected value while keeping the probability of low outcomes small. The entropic risk measure applied to accumulated rewards have been studied in \cite{mfcs2023} for stochastic games that extend MDPs with an adversarial  player.

\section{Preliminaries}
\label{sec:prelim}

\paragraph*{Notations for Markov decision processes.}
A \emph{Markov decision process} (MDP) is a tuple $\mathcal{M} = (S,\Act,P,\sinit,\rew)$
where $S$ is a finite set of states,
$\Act$ a finite set of actions,
$P \colon S \times \Act \times S \to [0,1] \cap \Rational$  the
transition probability function,
$\sinit \in S$ the initial state,
and
$\rew \colon S \times \Act \to \mathbb{N}$ the reward function. Note that we only allow non-negative rewards and that rational rewards can be transformed to integral rewards by multiplying all rewards with the least common multiple of all denominators of the rational rewards.
We require that
$\sum_{t\in S}P(s,\act,t) \in \{0,1\}$
for all $(s,\alpha)\in S\times \Act$.
We say that action $\alpha$ is enabled in state $s$ iff $\sum_{t\in S}P(s,\act,t) =1$ and denote the set of all actions that are enabled in state $s$ by $\Act(s)$. If $\Act(s)=\emptyset$, we say that $s$ is a \emph{trap} state.
The paths of $\cM$ are finite or
infinite sequences $s_0 \, \act_0 \, s_1 \, \act_1  \ldots$
where states and actions alternate such that
$P(s_i,\act_i,s_{i+1}) >0$ for all $i\geq0$.
For $\fpath =
    s_0 \, \act_0 \, s_1 \, \act_1 \,  \ldots \act_{k-1} \, s_k$,
$\rew(\fpath)=
    \rew(s_0,\act_0) + \ldots + \rew(s_{k-1},\act_{k-1})$ -- and analogously for infinite paths --
denotes the accumulated reward of $\pi$,
$P(\fpath) =
    P(s_0,\act_0,s_1)
    \cdot \ldots \cdot P(s_{k-1},\act_{k-1},s_k)$
its probability, and
$\last(\fpath)=s_k$ its last state. A path is called \emph{maximal} if it is infinite or ends in the trap state $\goal$.
The \emph{size} of $\cM$
is the sum of the number of states
plus the total sum of the logarithmic lengths of the non-zero
probability values
$P(s,\alpha,s')$ as fractions of co-prime integers and the weight values $\rew(s,\alpha)$.

A \emph{Markov chain} is an MDP in which the set of actions is a singleton. In this case, we can drop the set of actions  and consider a Markov chain as a tuple $\cM=(S,P,\sinit, \rew)$ where 
$P$ now is a function from $S\times S$ to $[0,1]$ and $\rew$ a function from $S$ to $\mathbb{N}$.

An \emph{end component} of $\cM$ is a strongly connected sub-MDP formalized by a subset $S^\prime\subseteq S$ of states and a non-empty subset $\mathfrak{A}(s)\subseteq \Act(s)$  for each state $s\in S^\prime$ such that for each $s\in S^\prime$, $t\in S$ and $\alpha\in \mathfrak{A}(s)$ with $P(s,\alpha,t)>0$, we have $t\in S^\prime$ and such that in the resulting sub-MDP all states are reachable from each other.
An end-component is a $0$-end-component if
it only contains state-action-pairs with reward $0$.
%


\paragraph*{Scheduler.}
A \emph{scheduler} for $\cM$
is a function $\sched$ that assigns to each non-maximal path $\fpath$
a probability distribution over $\Act(\last(\fpath))$.
If the choice of a scheduler $\sched$ depends only on the current state, i.e., if $\sched(\fpath)=\sched(\fpath^\prime)$ for all non-maximal paths $\fpath$ and $\fpath^\prime$ with $\last(\fpath)=\last(\fpath^\prime)$,
we say that $\sched$ is \emph{memoryless} and also view it as functions mapping states $s\in S$ to probability distributions over $\Act(s)$.
A scheduler $\sched$ that satisfies $\sched(\fpath)=\sched(\fpath^\prime)$ for all pairs of finite paths $\fpath$ and $\fpath^\prime$ with $\last(\fpath)=\last(\fpath^\prime)$ and $\rew(\fpath)=\rew(\fpath^\prime)$ is called \emph{reward-based} and can be viewed as a function from state-reward pairs $S\times \mathbb{N}$ to probability distributions over actions.
If there is a finite set $X$ of memory modes and a memory update function $U:S\times \Act \times S \times X \to X$ such that the choice of $\sched$ only depends on the current state after a finite path and the memory mode obtained from updating the memory mode according to $U$ in each step, we say that $\sched$ is a finite-memory scheduler.
A scheduler $\sched$ is called deterministic if $\sched(\fpath)$ is a Dirac distribution
for each path $\fpath$ in which case we also view the scheduler as a mapping to actions in $\Act(\last(\fpath))$.

\paragraph*{Probability measure.}
We write $\Pr^{\sched}_{\cM,s}$ 
to denote the probability measure induced by a scheduler $\sched$ and a state $s$ of an MDP $\cM$.
It is defined on the $\sigma$-algebra generated by the {cylinder sets} $\Cyl(\pi)$  of all maximal extensions of a finite path  $\pi =
    s_0 \, \act_0 \, s_1 \, \act_1 \,  \ldots \act_{k-1} \, s_k$ with $s_0=s$ by assigning  to $\Cyl(\pi)$ the probability that $\pi$ is realized under $\sched$, which is
   $ \sched(s_0)(\act_0) \cdot P(s_0,\act_0,s_1) \cdot 
     \ldots \cdot \sched(s_0\act_0 \dots s_{k-1})(\act_{k-1}) \cdot P(s_{k-1},\act_{k-1},s_k)$.
For a set of states $T$, we use $\Diamond T$ to denote the event that a state in $T$ is reached.
For details, see \cite{Puterman}.

For a random variable $X$ that is defined on (some of the) maximal paths in $\cM$, we denote the expected value of $X$ under the probability measure induced by a scheduler $\sched$ and state $s$ by $\mathbb{E}^{\sched}_{\cM,s}(X)$.
We define
$\mathbb{E}^{\min}_{\cM,s}(X) = \inf_{\sched} \mathbb{E}^{\sched}_{\cM,s}(X)$
and
$\mathbb{E}^{\max}_{\cM,s}(X) = \sup_{\sched} \mathbb{E}^{\sched}_{\cM,s}(X)$
where $\sched$ ranges over all schedulers for $\cM$ under which $X$ is defined almost surely.
The variance of $X$ under the probability measure determined by $\sched$ and $s$ in $\cM$ is denoted by $\Var^{\sched}_{\cM,s}(X)$ and defined by
$
    \Var^{\sched}_{\cM,s}(X)\eqdef\mathbb{E}^{\sched}_{\cM,s}((X-\mathbb{E}^{\sched}_{\cM,s}(X))^2)=\mathbb{E}^{\sched}_{\cM,s}(X^2) -\mathbb{E}^{\sched}_{\cM,s}(X)^2.
$
Furthermore, for a measurable set of paths $\psi$ with positive probability, $\mathbb{E}^{\sched}_{\cM,s}(X|\psi)$ denotes the conditional expectation of $X$ under $\psi$.
If $s=\sinit$, we sometimes drop the subscript $s$.

\paragraph*{Accumulated rewards.}
Given an MDP $\cM=(S,\Act,P,\sinit,\rew)$,  the total accumulated reward  is given by the extension of the function $\rew$ to maximal paths.
We can check whether $\mathbb{E}^{\max}_{\cM}(\rew) = \infty$ by checking whether all (maximal) end components are $0$-end components in polynomial time \cite{deAlf99}.
For our purposes, only MDPs $\cM$ with $\mathbb{E}^{\max}_{\cM}(\rew) < \infty$ are interesting. In these MDPs, we can collapse all end components  $\cE$, which are all $0$-end components, to single states $s_\cE$ while adding a transition with reward $0$ to a new trap state. This does not affect the possible distributions of the random variable $\rew$ that can be realized by a scheduler \cite{deAlf99}.
Furthermore, the behavior of the MDP starting from a state $s$ with $\mathbb{E}^{\max}_{\cM,s} (\rew) = 0$, i.e., from a state $s$ from which no positive reward is reachable, is irrelevant. 
So, we can collapse all these states $s$ with $\mathbb{E}^{\max}_{\cM,s} (\rew) = 0$ (together with the new trap state) to a single trap state that we call $\goal$. 
By these constructions, we obtain a new MDP $\cM^\prime$ in which exactly the same distributions of the total reward can be realized by schedulers as in $\cM$.
As $\cM^\prime$ does not contain any end components anymore and $\goal$ is the only  trap state in $\cM^\prime$, the state $\goal$ is now reached with probability $1$ under any scheduler.
In the light of the described constructions, we work under the following assumption:

\begin{assumption}
\label{ass:1}
W.l.o.g., we assume that all MDPs have a trap state $\goal$, which is reached with probability $1$ under all schedulers. We add this trap state to the signature and hence denote MDPs $\cM$ as tuples
$\cM= (S,\Act, P , \sinit, \rew, \goal)$.
\end{assumption}

All objectives  studied in this paper depend only on the distribution of the random variable $\rew$.
By the following lemma, which is folklore and follows from the  formulation in \cite[Lemma 2]{Piribauer2022TheVS}  (see Appendix \ref{app:prelim}), we can restrict ourselves to reward-based schedulers.
\begin{restatable}{lemma}{lemrewardbased}
\label{lem:reward-based}
Let $\cM = (S,\Act, P , \sinit, \rew, \goal)$ be an MDP satisfying Assumption \ref{ass:1}. 
Then, for any scheduler $\sched$ there is a reward-based scheduler $\tsched$ such that the distribution of the random variable $\rew$ is the same under the probability measures 
$\Pr_{\cM}^{\sched}$ and $\Pr_{\cM}^{\tsched}$.
\end{restatable}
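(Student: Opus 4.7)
The plan is to construct $\tsched$ by averaging the behavior of $\sched$ over all histories that land in the same state with the same accumulated reward, and then prove by induction on the length of paths that the joint distribution on (current state, accumulated reward) after $n$ steps agrees under $\sched$ and $\tsched$.

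More concretely, for every state-reward pair $(s,r)$ with positive probability of being reached under $\sched$ (meaning there exists a finite path $\pi$ with $\last(\pi)=s$, $\rew(\pi)=r$, and $\Pr^{\sched}_{\cM}(\Cyl(\pi))>0$), I would define
\[
\tsched(s,r)(\alpha) \;=\; \frac{\sum_{\pi:\, \last(\pi)=s,\,\rew(\pi)=r}\, \Pr^{\sched}_{\cM}(\Cyl(\pi))\cdot \sched(\pi)(\alpha)}{\sum_{\pi:\, \last(\pi)=s,\,\rew(\pi)=r}\, \Pr^{\sched}_{\cM}(\Cyl(\pi))},
\]
and choose $\tsched(s,r)$ arbitrarily (say uniformly over $\Act(s)$) on state-reward pairs that are reached with probability zero. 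By construction $\tsched$ is reward-based.

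The key technical step is an induction on $n$ showing that for every $(s,r)\in S\times\mathbb{N}$,
\[
q_n^{\sched}(s,r) \;\eqdef\; \Pr^{\sched}_{\cM}\bigl[\text{state after } n \text{ steps is } s \text{ with accumulated reward } r\bigr] \;=\; q_n^{\tsched}(s,r).
\]
The base case $n=0$ is trivial since both put all mass on $(\sinit,0)$. For the inductive step, I would expand $q_{n+1}^{\sched}(s',r')$ as a sum over predecessor pairs $(s,r)$ and actions $\alpha$ of $q_n^{\sched}(s,r)\cdot \bar\sched_n(s,r)(\alpha) \cdot P(s,\alpha,s')$, where $\bar\sched_n(s,r)$ denotes the action distribution used by $\sched$ averaged over histories that reach $(s,r)$ in $n$ steps; the definition of $\tsched$ is precisely the average of $\bar\sched_n(s,r)$ weighted by $q_n^\sched(s,r)$, but since we only need the product $q_n^\sched(s,r)\cdot\bar\sched_n(s,r)(\alpha)$ and this product equals $q_n^\sched(s,r)\cdot \tsched(s,r)(\alpha)$ wherever $q_n^\sched(s,r)>0$ (and is zero on both sides otherwise), the induction hypothesis yields the equality for $n+1$. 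The only delicate bookkeeping is verifying that the definition of $\tsched$, which averages over \emph{all} histories at $(s,r)$ regardless of length, still matches the step-$n$ average when multiplied by $q_n^\sched(s,r)$; this follows because $r' = r + \rew(s,\alpha)$ constrains the step count $n$ to be determined by the structure of the path, so the averaging across $n$'s does not introduce error.

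Finally, using Assumption~\ref{ass:1} that $\goal$ is reached with probability $1$ under every scheduler, the distribution of $\rew$ under $\sched$ is $\Pr^{\sched}[\rew = r] = \sum_{n\geq 0} q_n^{\sched}(\goal, r) - \sum_{n\geq 0} q_{n-1}^{\sched}(\goal, r)$ (telescoping to the eventual mass on $\goal$ with reward $r$), and the same holds for $\tsched$, so the step-$n$ equality gives the claim. The main obstacle is the bookkeeping sketched in the previous paragraph: ensuring that the history-averaged action distribution of $\sched$ at step $n$, weighted by $q_n^\sched$, can be consistently replaced by the globally averaged $\tsched(s,r)$ in the recursion. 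Once this is handled cleanly, the rest of the argument is routine.
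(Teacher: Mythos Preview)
Your construction of $\tsched$ is the right one, but the inductive invariant you propose---that the step-$n$ joint distributions $q_n^{\sched}(s,r)$ and $q_n^{\tsched}(s,r)$ coincide for every $n$---is false in general, and the ``bookkeeping'' you flag as the main obstacle is exactly where the argument breaks. The equality $q_n^{\sched}(s,r)\cdot\bar\sched_n(s,r)(\alpha)=q_n^{\sched}(s,r)\cdot\tsched(s,r)(\alpha)$ does not hold: $\tsched(s,r)$ is a weighted average of the $\bar\sched_n(s,r)$ over \emph{all} $n$, and there is no reason it should agree with any particular $\bar\sched_n(s,r)$. Your justification that ``$r'=r+\rew(s,\alpha)$ constrains the step count $n$ to be determined by the structure of the path'' is simply wrong under Assumption~\ref{ass:1}: zero-reward transitions and probabilistic self-loops are permitted, so the same pair $(s,r)$ can be visited at many different step counts. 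A concrete counterexample: take states $\sinit,a,b,c,\goal$, all rewards zero except a reward of $1$ on the transition $c\to\goal$; from $\sinit$ go to $a$ or $b$ each with probability $1/2$, from $a$ go to $b$, from $b$ choose either $\alpha$ (to $c$) or $\beta$ (to $\goal$). Let $\sched$ pick $\alpha$ when $b$ is reached at step $1$ and $\beta$ when $b$ is reached at step $2$. Then $\tsched(b,0)$ is the uniform mixture of $\alpha,\beta$, and one computes $q_2^{\sched}(c,0)=1/2$ but $q_2^{\tsched}(c,0)=1/4$, so the inductive claim fails already at $n=2$.

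The paper's proof avoids this by tracking the \emph{expected visit frequencies} $\vartheta_{s,w}=\sum_n q_n(s,w)$ rather than the step-$n$ masses. These satisfy a linear flow-balance system (incoming mass equals outgoing mass at every $(s,w)$), and the definition of $\tsched$ is precisely what makes $\vartheta^{\sched}$ and $\vartheta^{\tsched}$ solutions of the \emph{same} system; Assumption~\ref{ass:1} guarantees uniqueness of that solution, whence $\vartheta^{\sched}_{s,w}=\vartheta^{\tsched}_{s,w}$ for all $(s,w)$. Since $\goal$ is a trap reached almost surely, $\vartheta_{\goal,w}$ equals $\Pr(\rew=w)$, and the lemma follows. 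In the counterexample above, indeed $\vartheta^{\sched}_{c,0}=1/2=\vartheta^{\tsched}_{c,0}$ even though the step-wise masses differ. So the fix is not more careful bookkeeping in your induction but a change of invariant from $q_n$ to $\sum_n q_n$.
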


\section{Mean absolute deviation-penalized expectation}
\label{sec:MAD}

As described in the introduction, the VPE suffers from the drawback that optimal schedulers are \ERMin-schedulers, which is an undesirable behavior.
Intuitively, the reason for this behavior in the case of VPE lies in the fact that  the variance grows quadratically with the distance to the expected value.
A natural alternative is  choosing the absolute distance rather than the quadratic distance from the expected value as the measure for the penalty. 
So, we define the \textit{mean absolute deviation} (MAD) of a random variable $X$ as the probability-weighted sum of the distance to the expected value: 
		$
		\MAD(X) \eqdef \mathbb{E}(|X - \mathbb{E}(X)|) $.
		
		We  consider the MAD-penalized expectation (MADPE) of the accumulated weight in an MDP $\cM=(S, \Act, P ,\sinit, \rew,\goal)$ analogously to the VPE:
		We define the MAD of the accumulated reward $\rew$ under scheduler $\sched$ as
		$
		\MAD_{\cM}^{\sched}(\rew) \eqdef \mathbb{E}^{\sched}_{\cM} \left( \left| \rew -\mathbb{E}^{\sched}_{\cM} (\rew) \right| \right)$.
		The MAD-penalized expectation with parameter $\lambda\in \mathbb{R}$ is now 
		 $
		 \MADPE[\lambda]^{\sched}_{\cM}(\rew) \eqdef  \mathbb{E}_{\cM}^{\sched}(\rew) - \lambda \MAD_{\cM}^{\sched}(\rew)$
		  analogously to the VPE. Our goal is to find 
		  \begin{center}
		  $
		   \MADPE[\lambda]^{\max}_{\cM}(\rew) \eqdef  \sup_{\sched}  \MADPE[\lambda]^{\sched}_{\cM}(\rew)
		  $
		  \end{center}
		  as well as an optimal scheduler.
		  In the sequel, we will prove the following results. Proofs omitted in this section can be found in Appendix \ref{app:MAD}.
		  \begin{enumerate}
		  \item
		  In general, randomization is necessary to optimize the MADPE.
		  \item
		  If $\lambda>\frac{1}{2}$, then there is an MDP $\cM$  such that any optimal scheduler for the MADPE is an \ERMin-scheduler.
		  \item
		  If $\lambda\leq\frac{1}{2}$, for any MDP $\cM$, optimal schedulers can be chosen to be reward-based \ERMax-schedulers.
		  \item
		  If $\lambda\leq \frac{1}{2}$, the optimal MADPE can be computed in exponential time.
		  \item 
		  Even for acyclic Markov chains, deciding whether the MADPE exceeds a given threshold $\vartheta$ is PP-hard under polynomial-time Turing reductions.
		  \end{enumerate}

		  \subsection{Randomization and optimality of \ERMin-schedulers}
		  \label{sec:ranopt}

		  \begin{figure}[t]
		  \begin{subfigure}[b]{0.48\textwidth}
\centering
    \resizebox{.9\textwidth}{!}{%
      \begin{tikzpicture}[scale=1,auto,node distance=8mm,>=latex]
        \tikzstyle{round}=[thick,draw=black,circle]

        \node[round, draw=black,minimum size=10mm] (goal) {$\goal$};
        \node[round, below=11mm of goal, minimum size=10mm] (s2) {$s_1$};
        \node[round, left=20mm of s2, minimum size=10mm] (s1) {$s_0$};
        \node[round, right=20mm of s2, minimum size=10mm] (s3) {$s_2$};        
        \node[round, below=11mm of s2, minimum size=10mm] (init) {$\sinit$};
        
         \draw[color=black ,->,very thick] (init)  edge  node [very near start, anchor=center] (m6) {} node [pos=0.7,left=3pt] {$1/4$} node [pos=0.1,left=10pt] {$\alpha\colon +0$} (s1) ;
        \draw[color=black ,->,very thick] (init) edge [bend left=25]  node [very near start, anchor=center] (m5) {} node [pos=0.7,left] {$3/4$} (s2) ;
        \draw[color=black , very thick] (m5.center) edge [bend right=35]  (m6.center);
        
        \draw[color=black ,->,very thick] (init)  edge  node [very near start, anchor=center] (k6) {} node [pos=0.7,right=3pt] {$1/4$}  node [pos=0.1,right=10pt] {$\beta\colon +0$}  (s3) ;
        \draw[color=black ,->,very thick] (init) edge [bend right=25] node [very near start, anchor=center] (k5) {} node [pos=0.7,right] {$3/4$} (s2) ;
        \draw[color=black , very thick] (k5.center) edge [bend left=35] (k6.center);

        \draw[color=black ,->,very thick] (s1) edge [bend left=10]  node [pos=0.5,left=3pt] {$\tau: 0 $} (goal) ;
                \draw[color=black ,->,very thick] (s2) edge  node [pos=0.5,left] {$\tau: +1 $} (goal) ;
            \draw[color=black ,->,very thick] (s3) edge [bend right=10]  node [pos=0.5,right=3pt] {$\tau: +2 $} (goal) ;

      \end{tikzpicture}
    }

  \caption{The MDP $\cM$ used in Example \ref{exam:randomization}.}
  \label{fig:randomization}
  \end{subfigure}
  \begin{subfigure}[b]{0.48\textwidth}
\centering
    \resizebox{.9\textwidth}{!}{%
      \begin{tikzpicture}[scale=1,auto,node distance=8mm,>=latex]
        \tikzstyle{round}=[thick,draw=black,circle]

        \node[round, draw=black,minimum size=10mm] (goal) {$\goal$};
        \node[round, below=11mm of goal, minimum size=10mm] (s2) {$s_1$};
        \node[round, left=20mm of s2, minimum size=10mm] (s1) {$s_{\mathit{dec}}$};
        \node[ right=20mm of s2, minimum size=10mm] (s3) {};        
        \node[round, below=11mm of s2, minimum size=10mm] (init) {$\sinit$};
        
         \draw[color=black ,->,very thick] (s2)  edge  node [very near start, anchor=center] (m6) {} node [pos=0.4,below=3pt] {$1/2$} node [pos=0.45,above=5pt] {$\tau\colon +1$} (s1) ;
        \draw[color=black ,->,very thick] (s2) edge [loop above]  node [very near start, anchor=center] (m5) {} node [pos=0.4,left] {$1/2$} (s2) ;
        \draw[color=black , very thick] (m5.center) edge [bend right=35]  (m6.center);
        
        \draw[color=black ,->,very thick] (init)  edge  [bend right=60] node [very near start, anchor=center] (k6) {} node [pos=0.3,right=3pt] {$1-p$}  node [pos=0.1,right=10pt] {$\tau\colon +0$}  (goal) ;
        \draw[color=black ,->,very thick] (init) edge node [very near start, anchor=center] (k5) {} node [pos=0.7,right] {$p$} (s2) ;
        \draw[color=black , very thick] (k5.center) edge [bend left=35] (k6.center);

        \draw[color=black ,->,very thick] (s1) edge  node [pos=0.5,left=3pt] {$\beta: 0 $} (goal) ;
                \draw[color=black ,->,very thick] (s1) edge [bend left=40]  node [pos=0.5,left=3pt] {$\alpha: +1 $} (goal) ;

      \end{tikzpicture}
    }

  \caption{The MDP $\cM$ used in Example \ref{exam:minimizing}.}
  \label{fig:minimizing}
\end{subfigure}
\caption{Two example MDPs.}
\vspace{-12pt}
\end{figure}

		  We work with MDPs $\cM=(S,\Act,P,\sinit,\rew,\goal)$ satisfying Assumption \ref{ass:1}.
First, we show that randomization is necessary for the optimization of the MADPE in the following example.

\begin{example}
\label{exam:randomization}
Consider the MDP $\cM$  in Figure \ref{fig:randomization}. 
We consider the schedulers $\sched_\alpha$ choosing $\alpha$ in $\sinit$, $\sched_\beta$ choosing $\beta$, and $\sched_{1/2}$ choosing $\alpha$ and $\beta$ with probability $1/2$ each
and obtain:
$
\mathbb{E}^{\sched_{\alpha}}_{\cM}(\rew) = 3/4$,  $\mathbb{E}^{\sched_{1/2}}_{\cM}(\rew) = 1$, and $  \mathbb{E}^{\sched_{\beta}}_{\cM}(\rew) = 5/4$.
The MADs are
$
\MAD^{\sched_{\alpha}}_{\cM}(\rew) =  3/8$, $\MAD^{\sched_{1/2}}_{\cM}(\rew)= 1/4\cdot 1 = 1/4$, and $ \MAD^{\sched_{\beta}}_{\cM} (\rew)= 3/8$.
Clearly, the MADPE under $\sched_\beta$ is better than under $\sched_{\alpha}$ for any  $\lambda>0$. For the MADPE of $\sched_{1/2}$ and $\sched_\beta$ with $\lambda=4$, we obtain
\begin{center}
$
\MADPE[\lambda]^{\sched_{1/2}}_{\cM}(\rew) = 1- \frac{1}{4}\lambda  = 0, \quad \MADPE[\lambda]^{\sched_\beta}_{\cM}(\rew)  = \frac{5}{4} - \frac{3}{8}\lambda = -\frac{1}{4} .
$
\end{center}
So, the randomized scheduler $\sched_{1/2}$ is better than the deterministic schedulers $\sched_{\alpha}$ and $\sched_{\beta}$.
In Figure \ref{fig:MADVAR}, we  depict the MAD in comparison to the expected value of any randomized scheduler for $\cM$. The kink in the graph at expected value $1$ can be explained by the fact that
the MAD contains a summand for $|1-\mathbb{E}^{\sched}_{\cM}(\rew)|$.
The dotted blue line consists of all points in the $\MAD$-$\mathbb{E}$-plane with the same MADPE as the scheduler $\sched_{1/2}$ illustrating that this scheduler is in fact optimal as the MADPE increases in the direction of the arrow.
For comparison, we also depict the variances of randomized schedulers over the expectation. Clearly, for any $\lambda$ the deterministic scheduler choosing $\beta$ will always be VPE-optimal.
\end{example}

		  In the next example, we will illustrate that the MADPE fails to guarantee in general that optimal schedulers are eventually reward-maximizing.

		  \begin{example}
		  \label{exam:minimizing}

		Consider the MDP $\cM$ depicted in Figure \ref{fig:minimizing} for $p\in (0,1/3]$. 
		Always choosing $\alpha$ in state $s_{\mathit{dec}}$ maximizes the expected value. Under this scheduler, the expected value is $3p \leq 1$ as moving from state $s_1$ to state $s_{\mathit{dec}}$ takes two steps in expectation.
		So, under any scheduler, the expected value lies between $0$ and $1$. So, all paths leading via $s_1$ yield a reward above the expected value, while only the path going directly to $\goal$ from $\sinit$ yields a reward below the expected value. For the MAD under a scheduler $\sched$, we obtain $\MAD^{\sched}_{\cM}(\rew) = 2 \cdot (1-p) \cdot  \mathbb{E}^{\sched}_{\cM}(rew)$ (see Appendix \ref{app:MAD} for the calculations).

		For a given $\lambda>\frac{1}{2}$, we can choose  $p\in (0,1/3]$ such that $\lambda> \frac{1}{2(1-p)}$ and hence $\lambda \cdot 2 \cdot (1-p)>1$. Now, under any scheduler $\sched$, the MADPE for parameter $\lambda$ is
		\[
		\MADPE[\lambda]^{\sched}_{\cM}(\rew) = \mathbb{E}^{\sched}_{\cM}(rew) - \lambda \cdot 2 \cdot (1-p) \cdot  \mathbb{E}^{\sched}_{\cM}(rew) = (1- \lambda \cdot 2 \cdot (1-p) ) \mathbb{E}^{\sched}_{\cM}(rew).
		\]
		As $1- \lambda \cdot 2 \cdot (1-p) <0$, a scheduler maximizing the MADPE  has to minimize the expected value of $\rew$. In $\cM_p$, this means  always choosing $\beta$.
		So,  for any $\lambda>\frac{1}{2}$, there is an MDP in which  optimal schedulers have to minimize the future expected rewards no matter how large the accumulated reward already is.

		\end{example}

		    \begin{figure}[t]
		    \begin{subfigure}[b]{.40\textwidth}
		     \resizebox{\textwidth}{!}{
    \begin{tikzpicture}[yscale=1,xscale=1]
      \begin{axis}[
       width=7cm,height=4.5cm,
          axis lines = middle,
          xlabel={$\mathbb{E}$},
          ylabel={$\MAD$},
          ymin=0, ymax=.4,
          xmin=.5, xmax=1.5
                ]
        \addplot [name path=A,domain=0.75:1,
          samples=300,
          color=black,thick]
        {(3+8*(x-.75)-16*(x-.75)*(x-.75))/16 + 3/16 -3/4*(x-.75)};
         \addplot [name path=B,domain=1:1.25,
          samples=300,
          color=black,thick]
        {(3+8*(x-.75)-16*(x-.75)*(x-.75))/16 - 3/16 +3/4*(x-.75)};

        \node[label={180:{$\alpha$}},circle,fill,inner sep=2pt] at (axis cs:.75,.375) {};
        \node[label={0:{$\beta$}},circle,fill,inner sep=2pt] at (axis cs:1.25,0.375) {};

        \draw[dashed,thick,color=blue] (axis cs:.5,.125) -- (axis cs:1.5,.375);
      \draw[thick,color=blue,->] (axis cs:1.2,.3) -- (axis cs:1.25,.19);
      \node[label={0:{\textcolor{blue}{$\mathbb{E}-\lambda \cdot \MAD$}}}] at (axis cs:1.1,.17) {};
      \end{axis}
    \end{tikzpicture}
    }
  \end{subfigure}
  \hspace{24pt}
    \begin{subfigure}[b]{.40\textwidth}
		     \resizebox{\textwidth}{!}{
    \begin{tikzpicture}[yscale=1,xscale=1]
      \begin{axis}[
       width=7cm,height=4.5cm,
          axis lines = middle,
          xlabel={$\mathbb{E}$},
          ylabel={$\Var$},
          ymin=0, ymax=.4,
          xmin=.5, xmax=1.5
                ]
        \addplot [name path=A,domain=0.75:1.25,
          samples=300,
          color=black,thick]
        {0.25-(1-x)*(1-x)};

        \node[label={270:{$\alpha$}},circle,fill,inner sep=2pt] at (axis cs:.75,0.1875) {};
        \node[label={270:{$\beta$}},circle,fill,inner sep=2pt] at (axis cs:1.25,0.1875) {};

      \node[label={0:{\textcolor{blue}{$\mu-1\cdot \sigma^2$}}}] at (axis cs:3.7,1.3) {};
      \end{axis}
    \end{tikzpicture}
    }
  \end{subfigure}
  \caption{Plot of MAD and variance over the expected value for schedulers obtained by choosing $\alpha$ with probability $p\in[0,1]$ in the MDP $\cM$ depicted in Figure \ref{fig:randomization}.}
\vspace{-12pt}
\label{fig:MADVAR}
\end{figure}
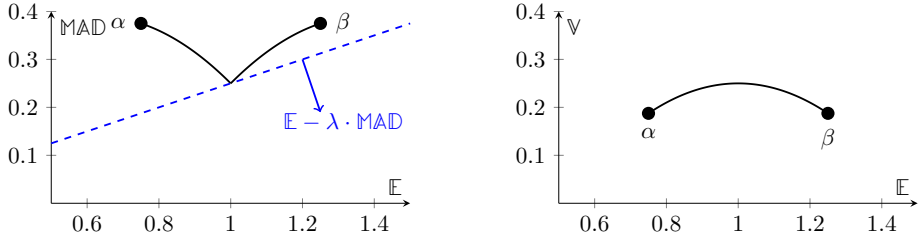

		\subsection{Sufficiently small parameters $\lambda$}
		
	As we have seen, the MADPE as an objective does not in general guarantee that optimal schedulers are \ERMax-schedulers. In this section, we now show that this desirable property is guaranteed if the risk-aversion parameter $\lambda$ is at most $\frac{1}{2}$.
	
	By Lemma \ref{lem:reward-based}, we already know that we can restrict ourselves to reward-based schedulers when optimizing the MADPE. 
	For two reward-based schedulers $\sched$ and $\tsched$ and a natural number $k$, we define the  reward-based scheduler $\sched \uparrow_k \tsched$ on state-reward-pairs $(s,w)\in S\times \mathbb{N}$ by
		$
		(\sched \uparrow_k \tsched) (s,w) = \begin{cases}
		\sched(s,w) & \text{ if $w<k$,} \\
		\tsched (s,w) & \text{ if $w\geq k$}
		\end{cases}
		$	
		where we view $\sched$ and $\tsched$ as  functions from $S\times \mathbb{N}$ to  distributions over actions.
		
		For risk-aversion parameters $\lambda$ of at most $1/2$, the following theorem implies that optimal schedulers for the MADPE can be chosen to be \ERMax-schedulers.

		\begin{restatable}{theorem}{thmMADoptimalsched}
		\label{thm:MAD_optimal_sched}
		Let $\cM=(S,\Act, P, \sinit, \rew, \goal)$ be an MDP satisfying Assumption \ref{ass:1} and let $\lambda \in (0,\frac{1}{2}]$ be  a parameter for the MADPE.
		Further, let $\tsched$ be a memoryless deterministic scheduler with $\mathbb{E}^{\tsched}_{\cM,s}(\rew)=\mathbb{E}^{\max}_{\cM,s}(\rew)$. Let $k=\lceil \mathbb{E}^{\max}_{\cM} (\rew) \rceil$.
		Then, for any reward-based scheduler $\sched$, we have
		$
		\MADPE[\lambda]_{\cM}^{\sched}(\rew) \leq \MADPE[\lambda]_{\cM}^{\sched \uparrow_k \tsched }(\rew)$.
				\end{restatable}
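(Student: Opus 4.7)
The target inequality $\MADPE[\lambda]^{\sched}_{\cM}(\rew) \leq \MADPE[\lambda]^{\sched \uparrow_k \tsched}_{\cM}(\rew)$, on writing $\sched' = \sched \uparrow_k \tsched$, $E_0 = \mathbb{E}^{\sched}_{\cM}(\rew)$, and $E_1 = \mathbb{E}^{\sched'}_{\cM}(\rew)$, unfolds into
\begin{equation*}
E_1 - E_0 \;\geq\; \lambda \bigl(\MAD_{\cM}^{\sched'}(\rew) - \MAD_{\cM}^{\sched}(\rew)\bigr).
\end{equation*}
So the plan is to prove that, up to the factor $\lambda \leq 1/2$, the MAD cannot grow faster than the expectation when switching from $\sched$ to $\sched'$.

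The starting point is three structural facts about the event $H = \{\rew \geq k\}$. First, $\sched$ and $\sched'$ coincide on all decisions made before the accumulated reward first reaches $k$, and rewards are non-negative integers, so $H$ is stable once entered. Consequently $\Pr^{\sched}_{\cM}(H) = \Pr^{\sched'}_{\cM}(H)$, and the distributions of $\rew$ restricted to $H^c$ agree under the two schedulers. Second, $\tsched$ attains $\mathbb{E}^{\max}_{\cM,s}(\rew)$ from every state, so switching to it on $H$ cannot decrease the total expectation, hence $E_1 \geq E_0$. Third, by the choice $k = \lceil \mathbb{E}^{\max}_{\cM}(\rew) \rceil$, we have $E_0 \leq E_1 \leq k$, so on $H$ the pathwise bound $\rew \geq k \geq E_i$ holds for $i \in \{0,1\}$.

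I would then invoke the identity $\MAD(X) = 2\,\mathbb{E}\bigl((\mathbb{E}(X) - X)^+\bigr)$, which follows from $\mathbb{E}((X-\mathbb{E}X)^+) - \mathbb{E}((\mathbb{E}X - X)^+) = \mathbb{E}(X - \mathbb{E}X) = 0$ together with $(X-\mathbb{E}X)^+ + (\mathbb{E}X - X)^+ = |X - \mathbb{E}X|$. Because $(E_i - \rew)^+ = 0$ on $H$ by the third fact above, this gives $\MAD_{\cM}^{\sched'}(\rew) = 2\,\mathbb{E}^{\sched'}_{\cM}((E_1 - \rew)^+ \mathbf{1}_{H^c})$ and similarly for $\sched$. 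The invariance of the restricted distribution lets me evaluate both expectations against the common measure $\Pr^{\sched}_{\cM}$, so that
\begin{equation*}
\MAD^{\sched'}_{\cM}(\rew) - \MAD^{\sched}_{\cM}(\rew) \;=\; 2\,\mathbb{E}^{\sched}_{\cM}\Bigl(\bigl((E_1 - \rew)^+ - (E_0 - \rew)^+\bigr)\cdot\mathbf{1}_{H^c}\Bigr).
\end{equation*}

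Since $c \mapsto (c - x)^+$ is $1$-Lipschitz in $c$, the integrand is bounded pointwise by $E_1 - E_0 \geq 0$, whence the MAD-difference is at most $2(E_1 - E_0)\,\Pr(H^c) \leq 2(E_1 - E_0)$. Multiplying by $\lambda \leq 1/2$ yields exactly the required inequality. The main obstacle is conceptual rather than technical: the MAD can genuinely \emph{increase} when switching to $\sched'$, so no monotonicity argument can work directly; the delicate point is that this increase is precisely controlled by $2(E_1 - E_0)$, so the threshold $\lambda = 1/2$ is exactly what is needed and — as Example~\ref{exam:minimizing} shows — cannot be relaxed.
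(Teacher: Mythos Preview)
Your proof is correct, and the route differs from the paper's. The paper keeps the full absolute value and splits $\MAD^{\sched'}-\MAD^{\sched}$ by conditioning on $\psi=\{\rew\geq k\}$: on $\psi$ the absolute value opens (since $\rew\geq k\geq \mathbb{E}$), giving a contribution bounded by $p(F_2-E_2)=E_1-E_0$; on $\neg\psi$ the distributions coincide while the mean shifts by $E_1-E_0$, so a pointwise Lipschitz bound on $|w-\cdot|$ gives at most $(1-p)(E_1-E_0)$. Summing yields $\MAD^{\sched'}-\MAD^{\sched}\leq(2-p)(E_1-E_0)\leq 2(E_1-E_0)$. Your argument instead invokes the identity $\MAD(X)=2\,\mathbb{E}((\mathbb{E}X-X)^+)$ (which the paper only states later, in Section~\ref{sec:semivariance}) to kill the $H$-contribution outright, leaving a single Lipschitz estimate on $H^c$ and the sharper intermediate bound $2(1-p)(E_1-E_0)$. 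Your version is shorter and slightly tighter; the paper's decomposition is more explicit about \emph{where} the two sources of MAD-growth come from (the genuine change of distribution above $k$ versus the mere shift of the mean below $k$), which makes the tightness of the $\lambda=1/2$ threshold more transparent.
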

				The theorem is shown by expressing the MADPE using conditional expectations under the condition that the reward exceeds the bound~$k$. Note that the theorem  implies that it does not matter which expectation optimal scheduler $\tsched$ is chosen after a reward of at least $\mathbb{E}^{\max}_{\cM} (\rew) $ has been accumulated.

		\subsection{Computing the maximal MADPE}

		Theorem \ref{thm:MAD_optimal_sched} tells us that the value $\MADPE[\lambda]^{\max}_{\cM}$ in an MDP $\cM$ for $\lambda\in (0,1/2]$ is the supremum of $\MADPE[\lambda]^{\sched}_{\cM}$
		over all reward-based schedulers $\sched$ that behave according to a fixed memoryless deterministic scheduler $\tsched$ maximizing the expected reward as soon as a reward of more than $\mathbb{E}^{\max}_{\cM}(\rew)$ 
		has been accumulated. Let us denote the set of such schedulers by $\Sched_{\cM}^\tsched$.  

		The result  shares some similarity with the results in \cite{tacas2017} on the computation of maximal conditional expected rewards under the condition that a set of target states is reached. In both cases, a reward-based scheduler that has to keep track of the accumulated reward up to some bound $B$ has to be computed. The bound $B$, however, is obtained quite differently. Here, the maximal expected accumulated reward can be used as this bound.  The bound in \cite{tacas2017} is in general much larger (although also exponential). Similar reward-based schedulers are also necessary for the model-checking of temporal formulas with certain reward operators \cite{DBLP:journals/fmsd/ChenFKPS13} and for the optimization of the variance-penalized expectation \cite{Piribauer2022TheVS}.

		We are now in the position to provide a model transformation such that afterwards we can restrict ourselves to memoryless schedulers.
		Given the MDP $\cM=(S,\Act, P, \sinit, \rew, \goal)$, let $k=\lceil \mathbb{E}^{\max}_{\cM}(\rew) \rceil$ and let $\ell$ be the largest reward of a state-weight pair in $\cM$.
		We now define the MDP $\cN = (S^\prime, \Act^\prime, P^\prime, \sinit^\prime, \rew^\prime, \goal^\prime)$.
		
		The state space $S^\prime = S\times \{0,\dots, k+\ell-1\}\cup \{\goal^\prime\}$ and represents states together with the reward that has been accumulated so far, as well as a new trap state $\goal^\prime$. 
		The initial state is $\sinit^\prime = (\sinit, 0)$.
		The set of actions is 
		extended by one new action $\tau$.
		The transition probability function $P^\prime$ for $(s,w)\in S\times \{0,\dots, k+\ell-1\}$ and $\alpha \in \Act$ is given by
		$
		P^{\prime}((s,w),\alpha,(t,v)) = 
		P(s,\alpha,t)$  if $w\leq k-1$ and $v=w+\rew(s,\alpha)$, and is set to $0$ otherwise.
		So, in all states in $S\times \{k,\dots, k+\ell-1\}$ and in $\{\goal \} \times \{0,\dots, k-1\}$ none of the actions in $\Act$ are enabled. Instead in 
	 these states the new action $\tau$ is enabled and leads to 
		the  trap state $\goal^\prime$ with probability $1$.
		The reward function is $0$ on all state-action pairs containing an action from $\Act$. Only the new action $\tau$ gets assigned a reward by
		\begin{align*}
		&\rew^\prime((\goal,w))=w \,\, && \text{ for all $w\in \{0,\dots,k+\ell-1\}$} \quad \text{ and }
		\\
		&\rew^\prime ((s,w)) =  w+ \mathbb{E}^{\max}_{\cM,s}(\rew) \,\, && \text{ for $s\in S\setminus \{\goal\}$ and $w\in \{k,\dots, k+\ell-1\}$.}
		\end{align*}
		So, in $\cN$, rewards are only received in the very last step when entering the trap state $\goal^\prime$.
		
		Now, a scheduler $\sched \in \Sched_{\cM}^\tsched$ for $\cM$ can be seen as a memoryless scheduler for $\cN$ and vice versa:
		The scheduler $\sched$ makes decision for all state-reward pairs $(s,w)$ with $s\not=\goal$ and $w <\mathbb{E}^{\max}_{\cM}(\rew)$. 
		For higher values of accumulated reward, it switches to the behavior of the memoryless scheduler $\tsched$. 
		A memoryless scheduler for $\cN$ has to choose a probability distribution over $\Act$ on the same pairs $(s,w)$. For higher values of $w$ or for pairs $(\goal,w)$, 
		 only action $\tau$ is enabled in $\cN$.
		So, with a slight abuse of notation, we interpret schedulers in $\Sched_{\cM}^\tsched$ for $\cM$ also as memoryless schedulers for $\cN$ and vice versa.
		\begin{remark}
		As  reward-based schedulers are sufficient to maximize the MADPE and in $\cN$ rewards are only received in the last step, we can conclude that 
		memoryless schedulers are sufficient to maximize the MADPE in $\cN$.
		\end{remark}
		
		\begin{restatable}{lemma}{lemcorrectnessconstruction}
		Given $\cM$ and $\cN$ as above, a scheduler $\sched \in \Sched_{\cM}^\tsched$ and $\lambda\in (0,1/2]$, we have
		$
		\MADPE[\lambda]^{\sched}_{\cM}(\rew) = \MADPE[\lambda]^{\sched}_{\cN}(\rew^\prime)$.
		\end{restatable}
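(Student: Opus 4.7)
I view $\sched$ as a memoryless scheduler in $\cN$ (its choices in $\cM$ only depend on the current state and the accumulated reward, both of which are encoded in the state space of $\cN$) and show, in turn, that the expected value and the MAD of $\rew$ under $\Pr^\sched_\cM$ agree with those of $\rew^\prime$ under $\Pr^\sched_\cN$.

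The maximal paths split into two classes: paths that reach $\goal$ with accumulated reward $w<k$ (event $A_w$), and paths that, for the first time, enter some state $s$ with accumulated reward $w\in\{k,\ldots,k{+}\ell{-}1\}$ (event $B_{s,w}$). By the construction of $\cN$, the probability under $\Pr^\sched_\cN$ of reaching $(\goal,w)$ equals $\Pr^\sched_\cM(A_w)$, and the probability of first visiting $(s,w)$ equals $\Pr^\sched_\cM(B_{s,w})$. On $A_w$ the final reward is $w$ in both MDPs. On $B_{s,w}$, $\cN$ terminates with deterministic reward $w+\mathbb{E}^{\max}_{\cM,s}(\rew)$, while in $\cM$ the scheduler $\sched\in\Sched_\cM^\tsched$ continues according to $\tsched$, so the conditional expectation of $\rew$ equals $w+\mathbb{E}^\tsched_{\cM,s}(\rew)=w+\mathbb{E}^{\max}_{\cM,s}(\rew)$. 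Summing over the partition yields $\mathbb{E}^\sched_\cM(\rew)=\mathbb{E}^\sched_\cN(\rew^\prime)$; denote this common value by $\mu$.

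For the MAD, the key observation is that $\mu\leq\mathbb{E}^{\max}_\cM(\rew)\leq k$. Consequently, on every event $B_{s,w}$ each continuation in $\cM$ has total reward at least $w\geq k\geq\mu$, and the deterministic value $w+\mathbb{E}^{\max}_{\cM,s}(\rew)$ in $\cN$ also exceeds $\mu$. Thus $|\rew-\mu|$ collapses to $\rew-\mu$ on $B_{s,w}$, and linearity of conditional expectation together with the identity of expected rewards established above gives
\[
\mathbb{E}^\sched_\cM(|\rew-\mu|\mid B_{s,w})\;=\;w+\mathbb{E}^{\max}_{\cM,s}(\rew)-\mu\;=\;\mathbb{E}^\sched_\cN(|\rew^\prime-\mu|\mid B_{s,w}).
\]
On $A_w$ both sides deterministically equal $|w-\mu|$. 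Summing over the partition yields $\MAD^\sched_\cM(\rew)=\MAD^\sched_\cN(\rew^\prime)$, and hence the MADPEs agree. The main subtlety is the nonlinearity of $|\cdot-\mu|$: without the bound $\mu\leq k$, the contribution from event $B_{s,w}$ would not collapse to an affine function of the conditional expected reward, and the equality of MADs could fail. This is precisely why $k$ is chosen as $\lceil\mathbb{E}^{\max}_\cM(\rew)\rceil$.
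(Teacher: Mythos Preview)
Your proposal is correct and follows essentially the same approach as the paper: both arguments partition the outcomes into ``reward $<k$'' and ``first hit state $s$ with reward $w\ge k$'' events, use that $\mu\le k$ to drop the absolute value on the latter, and then invoke linearity of conditional expectation together with $\mathbb{E}^\tsched_{\cM,s}(\rew)=\mathbb{E}^{\max}_{\cM,s}(\rew)$ to match the deterministic reward in $\cN$. Your write-up is slightly more explicit about the partition and about why the choice $k=\lceil\mathbb{E}^{\max}_\cM(\rew)\rceil$ is essential, but the mathematical content is identical.
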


		\noindent We  utilize the MDP $\cN$  to compute the maximal MADPE  via a quadratic program: 
		\begin{theorem}
		Let $\cM$ be an MDP with non-negative rewards and $\lambda\in (0,1/2]$. Then,  $\MADPE[\lambda]^{\max}_{\cM}$ is the optimal solution 
		to a linearly-constrained quadratic program that can be constructed from $M$ and $\lambda$ in exponential time.
		\end{theorem}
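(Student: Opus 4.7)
The plan is to leverage the exponential-size MDP $\cN$ from the preceding subsection. Because $\lambda\leq 1/2$ lets us invoke Theorem \ref{thm:MAD_optimal_sched}, it suffices to search over memoryless (randomized) schedulers on $\cN$, and, crucially, on $\cN$ rewards are collected only in the final step, upon taking the new action $\tau$. I will formulate a single linearly-constrained quadratic program on $\cN$ whose optimum equals $\MADPE[\lambda]^{\max}_{\cM}(\rew)$, using standard occupation-measure variables to encode the scheduler choices.

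Let $T\subseteq S^\prime$ denote the set of ``terminal'' states from which only $\tau$ is enabled in $\cN$, and write $v_{(s,w)}\eqdef\rew^\prime((s,w))$. First I would introduce non-negative variables $x_{(s,w),\alpha}$ for every non-terminal $(s,w)\in S^\prime$ and every $\alpha\in\Act(s)$, together with non-negative variables $y_{(s,w)}$ for each $(s,w)\in T$ representing the probability of eventually reaching $(s,w)$ (equivalently, of taking $\tau$ there). The standard Bellman flow equations on $\cN$ become linear constraints relating these variables and imply $\sum_{(s,w)\in T} y_{(s,w)}=1$. The expected reward is then the linear expression $e\eqdef\sum_{(s,w)\in T} y_{(s,w)}\,v_{(s,w)}$. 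To linearize the absolute value inside the MAD, I would introduce for each $(s,w)\in T$ an auxiliary variable $d_{(s,w)}\geq 0$ with the constraints $d_{(s,w)}\geq v_{(s,w)}-e$ and $d_{(s,w)}\geq e-v_{(s,w)}$, so that $d_{(s,w)}\geq|v_{(s,w)}-e|$ at every feasible point. Taking
\[
  e \;-\; \lambda \sum_{(s,w)\in T} y_{(s,w)}\,d_{(s,w)}
\]
as the (bilinear, hence quadratic) objective yields a linearly-constrained quadratic program. Since $\lambda>0$ and each $y_{(s,w)}\geq 0$, any maximizer drives each $d_{(s,w)}$ down to $|v_{(s,w)}-e|$ exactly, so the objective at optimality coincides with $\mathbb{E}(\rew^\prime)-\lambda\MAD(\rew^\prime)$ evaluated at the distribution induced by the corresponding scheduler.

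Correctness is completed by the classical bijection between feasible occupation measures on a finite MDP and memoryless randomized schedulers (recovered via $\sched(s,w)(\alpha)=x_{(s,w),\alpha}/\sum_\beta x_{(s,w),\beta}$ wherever the denominator is positive), combined with the earlier identity $\MADPE[\lambda]^{\sched}_{\cM}(\rew)=\MADPE[\lambda]^{\sched}_{\cN}(\rew^\prime)$ for $\sched\in\Sched_{\cM}^{\tsched}$. Because $\cN$ has exponentially many states and the QP uses only polynomially many variables per state-action pair of $\cN$, the whole construction runs in exponential time. The main obstacle is the non-convexity of the objective: one must argue that the relaxation $d_{(s,w)}\geq|v_{(s,w)}-e|$ tightens at every maximizer, for which $\lambda>0$ and the non-negativity of the $y_{(s,w)}$ are exactly what is needed; terminal states with $y_{(s,w)}=0$ contribute nothing to the objective and can be clipped to any value in their feasible range without affecting the optimum.
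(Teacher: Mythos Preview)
Your proposal is correct and follows essentially the same route as the paper: build the exponential-size unfolding $\cN$, encode memoryless schedulers via occupation-measure flow constraints, introduce auxiliary variables with two linear lower bounds to capture each $|v_{(s,w)}-e|$, and maximize the bilinear objective $e-\lambda\sum y_{(s,w)}d_{(s,w)}$, arguing that the negative sign together with $y_{(s,w)}\geq 0$ forces the auxiliaries tight at optimality. Your handling of the degenerate case $y_{(s,w)}=0$ is slightly more explicit than the paper's, but the construction, the linear constraints, and the tightening argument are the same.
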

		
		Note that the MDP $\cN$ can be constructed in exponential time from $\cM$ as the numerical value of the maximal expected value $\mathbb{E}^{\max}_{\cM}(\rew)$ is at most exponentially large in the size of $\cM$.
		So, it is sufficient to construct a quadratic program from $\cN$ in polynomial time.
		In the sequel, we provide the construction of the quadratic program and prove its correctness.

		We start by providing linear constraints that specify the possible combinations of expected frequencies of state-action-pairs under some scheduler.
		We  use variables $x_{s,w,\alpha}$ for all $s\in S$, $w\in  \{0,1,\dots, k+\ell - 1\}$, and $\alpha\in \Act^\prime((s,w))$. 
		For  these variables, we require
		\begin{align}
		\label{eq:c1}
		x_{s,w,\alpha} & \geq 0, \quad \text{ and}\\
		\label{eq:c2}
		\sum_{\alpha\in \Act(s)} x_{s,w,\alpha} & = \sum_{t\in S, \beta\in \Act(t)} x_{t,w-\rew(t,\beta) , \beta} \cdot P(t,\beta,s) + \mathbb{1}_{(s,w)= (\sinit,0)}
		\end{align}
		where  $\mathbb{1}_{(s,w)= (\sinit,0)}=1$ iff $s=\sinit$ and $w=0$, and $\mathbb{1}_{(s,w)= (\sinit,0)}=0$ otherwise.
		In any solution to these two constraints, the variables $x_{s,w,\alpha}$ represent the expected frequency with which action $\alpha$ is chosen in state $(s,w)$ under some scheduler.
		This is made precise below.

		Rewards are only accumulated on the final transitions from a state $(s,w)$ to $\goal^\prime$ via action $\tau$ for $s=\goal$ or $w\geq k$.
		As these transitions lead to the absorbing state with probability $1$, the expected frequency with which the action $\tau$ is chosen is  the probability with which the respective transition is taken.
		 So, we can encode the expected value in  an auxiliary variable  $e$ defined via the constraint 
		\begin{align}
		\label{eq:c4}
		e= \sum_{w=0}^{k-1} x_{\goal,w,\tau}\cdot w + \sum_{w=k}^{k+\ell -1} \sum_{s\in S} x_{s,w,\tau}\cdot (w +\mathbb{E}^{\max}_{\cM,s}(\rew)).
		\end{align}

		\begin{restatable}{lemma}{lemfrequency}
		\label{lem:frequency}
		For any solution vector to constraints (\ref{eq:c1}) -- (\ref{eq:c4}), there is a scheduler $\sched$ for $\cN$ such that
		$
		\Pr^{\sched}_{\cN} (\Diamond (s,w)) = x_{s,w,\tau} $ for all $(s,w)$ with $s= \goal$ or $w\geq k$
		and such that $ \mathbb{E}^{\sched}_{\cM}(\rew) = e$;
		and vice versa.
		\end{restatable}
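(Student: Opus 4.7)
The plan is to establish the two directions of the correspondence via the standard occupancy-measure interpretation of the variables $x_{s,w,\alpha}$, exploiting that $\cN$ is guaranteed to reach $\goal^\prime$ with probability one (every trajectory either enters $\{\goal\}\times\{0,\dots,k-1\}$ or surpasses accumulated reward $k$ within finitely many steps, and all such pairs only enable the terminating action $\tau$). This acyclic, almost-surely terminating structure means that for any scheduler the expected number of visits to a state $(s,w)$ is finite and equals the probability of ever reaching $(s,w)$ when $s=\goal$ or $w\geq k$, since these pairs are only entered once before transitioning to $\goal^\prime$.

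For the direction from solution to scheduler, given a solution vector $(x_{s,w,\alpha})$ and the auxiliary value $e$, I define a memoryless scheduler $\sched$ for $\cN$ by
\[
\sched((s,w))(\alpha) \;=\; \frac{x_{s,w,\alpha}}{\sum_{\beta\in\Act^\prime((s,w))} x_{s,w,\beta}}
\]
whenever the denominator is positive, and by an arbitrary choice otherwise (on states that are then reached with probability $0$). Using constraint \eqref{eq:c2} inductively along the (reward-stratified) reachability structure of $\cN$, I would then show that the expected frequency of visits to $(s,w)$ under $\sched$ equals $\sum_{\alpha} x_{s,w,\alpha}$, and consequently that the expected frequency of choosing $\alpha$ in $(s,w)$ is exactly $x_{s,w,\alpha}$. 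For absorbing pairs $(s,w)$ with $s=\goal$ or $w\geq k$ this frequency coincides with the reachability probability $\Pr^{\sched}_{\cN}(\Diamond(s,w)) = x_{s,w,\tau}$. Since all reward in $\cN$ is concentrated on the $\tau$-transitions, the expected reward under $\sched$ is precisely the right-hand side of \eqref{eq:c4}, which equals~$e$; applying the preceding lemma on the equivalence of $\cM$ and $\cN$ gives $\mathbb{E}^{\sched}_{\cM}(\rew) = e$.

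For the reverse direction, given a (by the earlier results w.l.o.g.\ memoryless) scheduler $\sched$ for $\cN$, set $x_{s,w,\alpha}$ to be the expected number of times $\alpha$ is chosen in $(s,w)$. Non-negativity \eqref{eq:c1} is immediate, and the flow equation \eqref{eq:c2} is the standard Kirchhoff-style balance between expected outgoing frequency at $(s,w)$ and expected incoming frequency from predecessors (plus the indicator for the initial state). Finally, \eqref{eq:c4} follows by reading off $\mathbb{E}^{\sched}_{\cN}(\rew^\prime)$ according to the definition of $\rew^\prime$ on the $\tau$-transitions.

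The main obstacle is the bookkeeping around the flow equation \eqref{eq:c2}: one has to verify that the sum on the right really captures all ways of entering the pair $(s,w)$, including the asymmetric role of pairs with $s=\goal$ or $w\geq k$ (where no $\Act$-action feeds further mass) and of the initial pair $(\sinit,0)$. This is routine but requires care because predecessors in $\cN$ are obtained by subtracting the reward of the chosen action in the original MDP, so only pairs $(t,w-\rew(t,\beta))$ with $w-\rew(t,\beta)\geq 0$ contribute; the zero lower bound on $x_{s,w,\alpha}$ and finiteness of expected frequencies (ensured by almost-sure termination) then make both sides well-defined and the induction on $w$ goes through.
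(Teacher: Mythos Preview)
Your approach is essentially the paper's: the paper simply invokes \cite[Theorem 4.7]{Kallenberg} for the standard correspondence between solutions of the flow constraints \eqref{eq:c1}--\eqref{eq:c2} and expected state--action frequencies under some scheduler, and then observes (as you do) that for the terminal pairs $(s,w)$ with $s=\goal$ or $w\geq k$ the expected frequency of $\tau$ equals the reachability probability, from which the claim about $e$ follows by the definition of $\rew'$.

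One small caution: you call $\cN$ ``acyclic'' and propose an induction on $w$, but $\cN$ need not be acyclic. The preprocessing for Assumption~\ref{ass:1} removes end components, yet $\cM$ may still contain zero-reward cycles that are not end components (e.g., a state with a zero-reward action returning to itself with probability $\tfrac{1}{2}$ and moving on with probability $\tfrac{1}{2}$); such cycles survive in $\cN$ within a fixed reward level $w$. Consequently, the flow equation \eqref{eq:c2} at level $w$ may couple several states at the same level, and a pure induction on $w$ does not go through directly---one has to solve the resulting linear system at each level (or, as the paper does, defer to the general occupancy-measure theorem, which only requires almost-sure termination and hence finite expected frequencies). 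This is a routine fix and does not affect the substance of your argument.
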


		Now, we can use these auxiliary variables to encode the MADPE as an objective function:
		
		\begin{align}
		\label{eq:obj_abs}
		 \text{maximize}\quad & e - \lambda \left( \sum_{w=0}^{k-1}  x_{\goal,w,
		tau}\cdot |w - e|   +  \sum_{w=k}^{k+\ell -1} \sum_{s\in S} x_{s,w,\tau}\cdot \left| w +\mathbb{E}^{\max}_{\cM,s}(\rew) - e\right| \right) 
		\end{align}
		This function still contains the absolute value operator. However, all absolute value terms occur with a negative sign. Therefore, we can use further variables $g_i$ for $i\in \{0,\dots, k-1\}$ and
		$h_{s,w}$ for $(s,w)\in S\times \{k,\dots, k+\ell -1\}$ to capture the absolute value. The following constraints state that these variables are at least as big as the respective absolute value terms.
		For $w\in \{0,\dots, k-1\}$, we require
		\begin{align}
		\label{eq:c5}
		g_w \geq w - e \quad \text{ and }\quad -g_w \leq w - e.
		\end{align}
		For $(s,w)\in S\times \{k,\dots, k+\ell -1\}$, we require 
		\begin{align}
		\label{eq:c6}
		h_{s,w} \geq w +\mathbb{E}^{\max}_{\cM,s}(\rew) - e \quad \text{ and }\quad -h_{s,w} \leq w +\mathbb{E}^{\max}_{\cM,s}(\rew) - e.
		\end{align}
		The new objective function can now be written as
		\begin{align}
		\label{eq:objective}
		\text{maximize }\quad & e - \lambda \left( \sum_{w=0}^{k-1}  x_{\goal,w,\tau}\cdot g_w   +  \sum_{w=k}^{k+\ell -1} \sum_{s\in S} x_{s,w,\tau}\cdot  h_{s,w} \right).
		\end{align}
		
		\begin{restatable}{theorem}{thmquadraticprogram}
				The optimal solution to (\ref{eq:objective}) under constraints (\ref{eq:c1}) - (\ref{eq:c4}), (\ref{eq:c5}), and (\ref{eq:c6}) is the maximal MADPE $\MADPE[\lambda]^{\max}_{\cN}$.
		\end{restatable}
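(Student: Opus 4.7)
The plan is to prove the identity in two largely independent steps: (i) eliminate the auxiliary variables $g_w$ and $h_{s,w}$ to reduce the quadratic program (\ref{eq:objective}) to the optimization of (\ref{eq:obj_abs}) over the constraints (\ref{eq:c1})--(\ref{eq:c4}) alone, and (ii) apply Lemma \ref{lem:frequency} in both directions to match feasible solutions of those constraints with schedulers of $\cN$ whose MADPE equals the value of (\ref{eq:obj_abs}).

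For step (i), I would note that constraint (\ref{eq:c5}) is equivalent to $g_w \geq |w - e|$ and (\ref{eq:c6}) to $h_{s,w} \geq |w + \mathbb{E}^{\max}_{\cM,s}(\rew) - e|$. In the objective (\ref{eq:objective}), the variables $g_w$ and $h_{s,w}$ appear with coefficients $-\lambda x_{\goal,w,\tau}$ and $-\lambda x_{s,w,\tau}$, respectively, both non-positive because $\lambda > 0$ and (\ref{eq:c1}) forces the $x$-variables to be non-negative. Hence for any fixed $(x,e)$ the objective is maximized by pushing $g_w$ and $h_{s,w}$ to their lower bounds, which turns (\ref{eq:objective}) into (\ref{eq:obj_abs}). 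Attainment of the supremum follows from compactness of the projection of the feasible region to the $x$-variables: since (\ref{eq:c2}) is a flow equation in an MDP whose trap $\goal^\prime$ is reached almost surely, all expected frequencies are finite and the resulting polytope is bounded.

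For step (ii), I would invoke Lemma \ref{lem:frequency}. Any solution of (\ref{eq:c1})--(\ref{eq:c4}) yields a scheduler $\sched$ of $\cN$ with $\Pr^{\sched}_{\cN}(\Diamond(s,w)) = x_{s,w,\tau}$ for all pairs $(s,w)$ with $s=\goal$ or $w \geq k$, and $\mathbb{E}^{\sched}_{\cN}(\rew^\prime) = e$. Since reward in $\cN$ is collected only on the unique $\tau$-transition from such a pair into $\goal^\prime$, with value $w$ or $w + \mathbb{E}^{\max}_{\cM,s}(\rew)$ respectively, the distribution of $\rew^\prime$ under $\sched$ is fully determined by these terminal probabilities, and (\ref{eq:obj_abs}) evaluates exactly to $\mathbb{E}^{\sched}_{\cN}(\rew^\prime) - \lambda \MAD^{\sched}_{\cN}(\rew^\prime) = \MADPE[\lambda]^{\sched}_{\cN}(\rew^\prime)$. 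The converse direction of the lemma turns every scheduler into a feasible solution of matching value, and taking suprema on both sides yields the claimed identity.

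Rather than a deep obstacle, the main thing to check carefully is the sign bookkeeping in step (i): the elimination of the $|\cdot|$-terms works precisely because the coefficients of $g_w$ and $h_{s,w}$ in (\ref{eq:objective}) are non-positive, which uses both $\lambda > 0$ and the non-negativity of the frequency variables imposed by (\ref{eq:c1}). I would also verify that the index ranges in (\ref{eq:obj_abs}) partition exactly the set of state-reward pairs for which Lemma \ref{lem:frequency} delivers $x_{s,w,\tau}$, so that the term-by-term matching is complete. Beyond these cosmetic points the proof is a direct unfolding, with all the probabilistic content absorbed into Lemma \ref{lem:frequency}.
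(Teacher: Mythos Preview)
Your proposal is correct and follows essentially the same approach as the paper: the paper's proof also argues that the auxiliary variables $g_w$, $h_{s,w}$ carry non-positive coefficients in (\ref{eq:objective}) (using non-negativity of the $x$-variables and $\lambda>0$), so they are pushed to their lower bounds $|w-e|$ and $|w+\mathbb{E}^{\max}_{\cM,s}(\rew)-e|$, reducing the problem to (\ref{eq:obj_abs}), which ``directly encodes the MADPE''. Your write-up is somewhat more explicit than the paper's---you spell out the invocation of Lemma~\ref{lem:frequency} in both directions and add a compactness remark for attainment---but the underlying argument is the same.
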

		
		\begin{proof}
		As all variables are non-negative, the variables $g_w$ with $0\leq w \leq k-1$   and $h_{s,w}$ with $w\geq k$  in the objective function  (\ref{eq:objective}) occur under a negative sign. 
		To maximize the objective function, these variables hence have to be set to the minimal possible values given the value of the variable $e$. 
		By constraints (\ref{eq:c5}) and (\ref{eq:c6}), these minimal possible values are the values $|w-e|$ and $|w +\mathbb{E}^{\max}_{\cM,s}(\rew) - e|$, respectively.
		So, the optimal value of this quadratic objective function is the same as of the objective function (\ref{eq:obj_abs}), which directly encodes the MADPE. 
		\end{proof}

\subsection{Computational hardness of the MADPE}
		
		The complexity class PP \cite{Gill1977} is characterized as the class of languages $\mathcal{L}$ that have a probabilistic polynomial-time bounded Turing machine $M_{\mathcal{L}}$ such that  $\tau \in \mathcal{L}$ if and only if $M_{\mathcal{L}}$ accepts $\tau$ with probability at least $1/2$ for all words $\tau$.
		We will show PP-hardness under polynomial-time Turing reductions. So, for the reduction, we allow querying an oracle for the problem we reduce to.
A polynomial time algorithm for a problem that is PP-hard under polynomial Turing reductions would imply that the polynomial hierarchy collapses~\cite{Toda1991}.

		\begin{restatable}{theorem}{thmPPMAD}
		\label{thm:PP-hard_MAD}
		Deciding for an acyclic Markov chain $\cM$ and a threshold $\vartheta\in \mathbb{Q}$ whether $\MAD_{\cM}(\rew) \geq \vartheta$ is PP-hard under polynomial-time Turing reductions.
		\end{restatable}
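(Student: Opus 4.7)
The plan is a polynomial-time Turing reduction from the cost-threshold problem for acyclic Markov chains to the MAD-threshold problem. The cost-threshold problem -- ``given an acyclic Markov chain $\cN$ with non-negative integer rewards and rationals $t,q$, decide whether $\Pr_\cN(\rew \geq t) \geq q$'' -- is PP-hard by the standard encoding of \#SUBSET-SUM into the subset-sum sampler chain that flips one fair coin per item. The intuition driving the reduction is that in an acyclic chain both $\mathbb{E}(\rew)$ and $\mathbb{E}(\rew^2)$ are polynomial-time computable, so hardness in $\MAD = \mathbb{E}(|\rew-\mu|)$ can only come from the \emph{distributional} information about how probability mass splits above and below $\mu$, which is exactly what CDF queries of the form $\Pr_\cN(\rew\geq t)$ extract.

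Given an instance $(\cN, t, q)$, I would build an acyclic chain $\cM$ by prefixing $\cN$ with a branching gadget: from a new initial state, go with probability $p$ into $\cN$ and with probability $1-p$ into a deterministic branch ending in $\goal$ with constant reward $c$, where $p$ and $c$ are rationals of polynomial bit size chosen so that $\mu_\cM = \mathbb{E}_\cM(\rew)$ takes any prescribed \emph{non-integer} value in the open interval $(t-1,t)$. A direct calculation, combined with the elementary identity $\mathbb{E}_\cN(|\rew-\mu|) = 2\,\mathbb{E}_\cN(\rew \cdot \mathbb{1}_{\rew\geq\mu}) - 2\mu\,\Pr_\cN(\rew\geq\mu) + \mu - \mu_\cN$ (valid whenever $\mu$ is not an atom of $\rew$), gives
\[
\MAD_\cM(\rew) = p\bigl(2E^+ - 2\mu_\cM P^+ + \mu_\cM - \mu_\cN\bigr) + (1-p)\,|c-\mu_\cM|,
\]
where $P^+ = \Pr_\cN(\rew \geq t)$ and $E^+ = \mathbb{E}_\cN(\rew \cdot \mathbb{1}_{\rew\geq t})$ are constants that do \emph{not} depend on the precise value of $\mu_\cM$ inside $(t-1,t)$, since the rewards of $\cN$ are integers.

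Using polynomially many threshold oracle queries, I would binary-search the exact rational value of $\MAD_\cM$ (whose bit complexity is polynomial by standard bounds). Repeating this with a second gadget choice $(p',c')$ realizing a distinct $\mu'_\cM$ in the same integer interval yields two affine equations in the unknowns $E^+$ and $P^+$; the associated $2\times 2$ coefficient matrix has determinant $-4pp'(\mu_\cM - \mu'_\cM) \neq 0$, so solving recovers $P^+ = \Pr_\cN(\rew\geq t)$, which one then compares to $q$. The main obstacle is bookkeeping: one must keep all rationals of polynomial bit size throughout the two MAD computations, ensure non-singularity by picking $\mu_\cM$ and $\mu'_\cM$ sufficiently separated inside $(t-1,t)$, and handle the corner case that $t$ is itself an atom of $\cN$'s reward distribution, which can be resolved uniformly by first doubling all rewards so that the thresholds of interest lie strictly between consecutive integers.
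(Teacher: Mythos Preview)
Your proposal is correct and follows essentially the same approach as the paper: both arguments (i) use polynomially many oracle calls to binary-search the exact rational value of $\MAD$, exploiting the polynomial bit-size bound, (ii) attach a simple probabilistic gadget that reroutes mass to a constant-reward sink so as to place the expected value at prescribed locations, and (iii) compare the resulting MAD values to extract the threshold probability $\Pr_\cN(\rew \geq t)$.

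The only noteworthy difference is cosmetic. The paper places the controlled mean at $t$ and at $t+1/2$ and observes that the \emph{difference} of the two quantities $\sum_w \Pr(\rew=w)\,|w-\mu|$ telescopes directly to $\Pr(\rew>t)-1/2$, so the partial expectation $E^+=\mathbb{E}(\rew\cdot\mathbb{1}_{\rew\geq t})$ never appears explicitly. You instead place both means inside $(t-1,t)$, treat $E^+$ and $P^+$ as two unknowns, and solve the resulting $2\times 2$ linear system. Both are equally valid; the paper's version is slightly slicker, yours is slightly more systematic. One small point to watch in your write-up: the gadget reward $c$ must be non-negative in this model, so you will need either a case distinction (as the paper does for $t<\mathbb{E}_\cN(\rew)$ versus $t\geq\mathbb{E}_\cN(\rew)$) or an appeal to translation-invariance of $\MAD$ to shift into the non-negative range.
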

		
		\begin{proof}[Proof sketch]
		We reduce from the following  problem that is shown to be PP-hard in \cite{HaaseK15}: 
		Given an acyclic Markov chain $\cM=(S,P,\sinit, \rew)$, and a natural number $t$, decide whether $\Pr_{\cM}(\rew>t) \geq 1/2$.	
		We first show that the exact value $\MAD_{\cM}(\rew)$ can be computed in acyclic Markov chains via a binary search using polynomially many calls to an oracle for the threshold problem.
		Then, we prove that $\Pr_{\cM}(\rew>t)$ can be computed by comparing the MAD in two variations of $\cM$ that ensure that the expected value of $\rew$ in these variations is $t$ and $t+1/2$, respectively.  
			\end{proof}

		\begin{corollary}
		Deciding for an acyclic Markov chain $\cM$,  $\lambda\in \mathbb{Q}_+$ and  $\vartheta\in \mathbb{Q}$ if $\MADPE[\lambda]_{\cM}(\rew)\geq \vartheta$ is PP-hard under polynomial-time Turing reductions.
		\end{corollary}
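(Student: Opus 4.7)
The plan is to establish PP-hardness by giving a polynomial-time Turing reduction from the MAD threshold problem on acyclic Markov chains, which is PP-hard by Theorem~\ref{thm:PP-hard_MAD}, to the MADPE threshold problem. The key identity is that on a Markov chain there is no choice of scheduler, so $\MADPE[\lambda]_{\cM}(\rew) = \mathbb{E}_{\cM}(\rew) - \lambda \cdot \MAD_{\cM}(\rew)$ is a linear function of $\MAD_{\cM}(\rew)$ once the expectation is known. This linear dependence will let us reroute each MAD threshold query through a constant number of MADPE oracle calls.

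Concretely, given an instance $(\cM,\vartheta)$ of the MAD threshold problem, I first compute $\mu := \mathbb{E}_{\cM}(\rew)$ in polynomial time by solving the standard linear equation system for the expected total reward until reaching $\goal$; since $\cM$ is acyclic, this yields a rational number of polynomial bit-length. Fixing any positive rational $\lambda$ (for instance $\lambda = 1$), I form the threshold $\vartheta_1 := \mu - \lambda \vartheta$ and query the MADPE oracle on $(\cM, \lambda, \vartheta_1)$. By the identity above, the oracle answers YES iff $\MAD_{\cM}(\rew) \leq \vartheta$. If the oracle answers NO, then $\MAD_{\cM}(\rew) > \vartheta$ and hence $\MAD_{\cM}(\rew) \geq \vartheta$, so I return YES.

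The main obstacle is the boundary case: when the first MADPE query returns YES, I must still distinguish $\MAD_{\cM}(\rew) = \vartheta$ (for which the MAD answer is YES) from $\MAD_{\cM}(\rew) < \vartheta$ (for which it is NO). To handle this, I exploit the fact that both $\MAD_{\cM}(\rew)$ and $\vartheta$ are rationals of polynomial bit-length: for $\vartheta$ this is part of the input, and for $\MAD_{\cM}(\rew)$ it follows from the binary search argument inside the proof of Theorem~\ref{thm:PP-hard_MAD}, which computes the exact value $\MAD_{\cM}(\rew)$ with polynomially many oracle queries. Hence there is some $\epsilon > 0$ of polynomial bit-length such that $\MAD_{\cM}(\rew) \neq \vartheta$ already implies $|\MAD_{\cM}(\rew) - \vartheta| \geq \epsilon$. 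I make a second query to the MADPE oracle with threshold $\vartheta_2 := \mu - \lambda(\vartheta - \epsilon)$, which answers YES iff $\MAD_{\cM}(\rew) \leq \vartheta - \epsilon$: a YES here rules out $\MAD_{\cM}(\rew) = \vartheta$ and I return NO, while a NO combined with the first query returning YES forces $\MAD_{\cM}(\rew) \in (\vartheta - \epsilon, \vartheta]$, so by the choice of $\epsilon$ actually $\MAD_{\cM}(\rew) = \vartheta$ and I return YES.

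Altogether, each MAD threshold query is answered using at most two MADPE oracle calls and polynomial-time preprocessing, so the polynomial-time Turing reduction chain from a PP-complete problem through Theorem~\ref{thm:PP-hard_MAD} composes to give PP-hardness of the MADPE threshold problem on acyclic Markov chains. I expect the only genuinely delicate step to be the quantitative bit-length bound used to select $\epsilon$; the rest is an elementary rearrangement of the MADPE identity.
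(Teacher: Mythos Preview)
Your argument is correct and follows the same underlying idea the paper intends: in a Markov chain the identity $\MADPE[\lambda]_{\cM}(\rew)=\mathbb{E}_{\cM}(\rew)-\lambda\cdot\MAD_{\cM}(\rew)$ together with polynomial-time computability of $\mathbb{E}_{\cM}(\rew)$ makes a MADPE threshold oracle at least as powerful as a MAD threshold oracle, so PP-hardness transfers from Theorem~\ref{thm:PP-hard_MAD}. The paper states this as a bare corollary without proof, so your write-up simply spells out what the paper leaves implicit.

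One minor simplification you could adopt: rather than handling the boundary case with a tailored $\epsilon$ and two queries per MAD query, you can recycle the binary-search argument from the proof of Theorem~\ref{thm:PP-hard_MAD} directly for MADPE. Since $\MAD_{\cM}(\rew)$ is a multiple of $1/L^2$ and $\mathbb{E}_{\cM}(\rew)$ is a multiple of $1/L$, the value $\MADPE[\lambda]_{\cM}(\rew)$ is itself a rational whose denominator divides a polynomially computable integer, and it is bounded by the maximal path reward. Hence polynomially many MADPE oracle calls suffice to pin down the exact MADPE, from which the exact MAD follows immediately. This avoids the separate boundary analysis, but your version is equally valid.
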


\section{Semi-deviation measure-penalized expectation}
\label{sec:semivariance}

To overcome the restrictions on the parameter $\lambda$ for the MADPE or to overcome the undesirable behavior observed for the VPE, one might be tempted to consider the semi-MAD (SMAD) or the semi-variance as a deviation measure that only considers outcomes below the expected value as a measure for the penalty.

\paragraph*{Semi-MAD-penalized expectation.}
		\label{sec:semiMAD}
		 We define
		$
		\SMAD(X) = \mathbb{E}(\max(0, \mathbb{E}(X) - X))
		$
		for a random variable $X$. So, all outcomes above the expected value do not contribute to the SMAD.
		However, the SMAD is always half the MAD, i.e., $\SMAD(X)=\MAD(X)/2$, as one can easily compute (see Appendix \ref{app:SMAD}).
				So, using the SMAD as a penalty term is the same as using the MAD besides a rescaling of the penalty factor $\lambda$ by a factor of $2$.

				\paragraph*{Semi-variance-penalized expectation (SVPE).}

		We now define  the {semi-variance},  to only treat outliers below the expected value with a quadratic penalty. However we will see that SVPE-optimal schedulers might still have to be \ERMin-schedulers.
		We define the semi-variance by ignoring outliers above the expected value as follows
$			\SV(X) \coloneqq \mathbb{E}\big(\big(\min\big(X - \mathbb{E}(X), 0\big)\big)^2\big)$.
		Applied to the accumulated reward in an MDP $\cM=(S,\Act, P, \sinit, \rew,\goal)$, we define
		$
		\SV^{\sched}_{\cM}(\rew) \coloneqq \mathbb{E}^\sched_{\cM}\big(\big(\min\big(\rew - \mathbb{E}^\sched_{\cM}(\rew), 0\big)\big)^2\big)
		$ for schedulers $\sched$. Using this as a penalty, we obtain the SVPE  for a parameter $\lambda$
		\[
		\SVPE[\lambda]^\sched_{\cM}(\rew) = \mathbb{E}^{\sched}_{\cM}(\rew)-\lambda \cdot \SV^{\sched}_{\cM}(\rew)
		\]
		and define the optimal value $\SVPE[\lambda]^{\max}_{\cM}(\rew)$ as usual.
		Besides the possible necessity of \ERMin-schedulers, we will  see that randomization is necessary to optimize the SVPE in contrast to the VPE, for which  optimal deterministic (finite-memory) schedulers exist \cite{Piribauer2022TheVS}.

		  \begin{figure}[t]
		  \begin{subfigure}[b]{0.48\textwidth}
		  \centering
    \resizebox{.9\textwidth}{!}{%
      \begin{tikzpicture}[scale=1,auto,node distance=8mm,>=latex]
        \tikzstyle{round}=[thick,draw=black,circle]

        \node[round, draw=black,minimum size=10mm] (goal) {$\goal$};
        \node[round, below=11mm of goal, minimum size=10mm] (s2) {$s_1$};
        \node[round, left=20mm of s2, minimum size=10mm] (s1) {$s_{\mathit{dec}}$};
        \node[ right=20mm of s2, minimum size=10mm] (s3) {};        
        \node[round, below=11mm of s2, minimum size=10mm] (init) {$\sinit$};
        
         \draw[color=black ,->,very thick] (s2)  edge  node [very near start, anchor=center] (m6) {} node [pos=0.4,below=3pt] {$1/2$} node [pos=0.45,above=5pt] {$\tau\colon +k$} (s1) ;
        \draw[color=black ,->,very thick] (s2) edge [loop above]  node [very near start, anchor=center] (m5) {} node [pos=0.4,left] {$1/2$} (s2) ;
        \draw[color=black , very thick] (m5.center) edge [bend right=35]  (m6.center);
        
        \draw[color=black ,->,very thick] (init)  edge  [bend right=60] node [very near start, anchor=center] (k6) {} node [pos=0.3,right=3pt] {$1/2$}  node [pos=0.1,right=10pt] {$\tau\colon +0$}  (goal) ;
        \draw[color=black ,->,very thick] (init) edge node [very near start, anchor=center] (k5) {} node [pos=0.7,right] {$1/2$} (s2) ;
        \draw[color=black , very thick] (k5.center) edge [bend left=35] (k6.center);

        \draw[color=black ,->,very thick] (s1) edge  node [pos=0.5,left=3pt] {$\beta: 0 $} (goal) ;
                \draw[color=black ,->,very thick] (s1) edge [bend left=40]  node [pos=0.5,left=3pt] {$\alpha: +1 $} (goal) ;

      \end{tikzpicture}
    }

  \caption{The MDP $\cM$ used in Example \ref{exam:minimization_SVPE}.}
   \label{fig:minimizing_SPE}
  \end{subfigure}
  \begin{subfigure}[b]{0.48\textwidth}

\centering
    \resizebox{.9\textwidth}{!}{%
      \begin{tikzpicture}[scale=1,auto,node distance=8mm,>=latex]
        \tikzstyle{round}=[thick,draw=black,circle]

        \node[round, draw=black,minimum size=10mm] (goal) {$\goal$};
        \node[round, below=11mm of goal, minimum size=10mm] (s2) {$s_1$};
        \node[round, left=20mm of s2, minimum size=10mm] (s1) {$s_0$};
        \node[round, right=20mm of s2, minimum size=10mm] (s3) {$s_2$};        
        \node[round, below=11mm of s2, minimum size=10mm] (init) {$\sinit$};
        
         \draw[color=black ,->,very thick] (init)  edge  node [very near start, anchor=center] (m6) {} node [pos=0.7,left=3pt] {$1/2$} node [pos=0.1,left=10pt] {$\alpha\colon +0$} (s1) ;
        \draw[color=black ,->,very thick] (init) edge  node [very near start, anchor=center] (m5) {} node [pos=0.7,left] {$1/2$} (s2) ;
        \draw[color=black , very thick] (m5.center) edge [bend right=35]  (m6.center);
        
        \draw[color=black ,->,very thick] (init)  edge    node [pos=0.1,right=10pt] {$\beta\colon +0$}  (s3) ;

        \draw[color=black ,->,very thick] (s1) edge [bend left=10]  node [pos=0.5,left=3pt] {$\tau: 0 $} (goal) ;
                \draw[color=black ,->,very thick] (s2) edge  node [pos=0.5,left] {$\tau: +100 $} (goal) ;
            \draw[color=black ,->,very thick] (s3) edge [bend right=10]  node [pos=0.5,right=3pt] {$\tau: +40 $} (goal) ;

      \end{tikzpicture}
    }

  \caption{The MDP $\cM$ used in Example \ref{exam:randomization_SPE}.}
 \label{fig:random_SPE}
\end{subfigure}
\caption{Two example MDPs for phenomena of the SVPE.}
\end{figure}
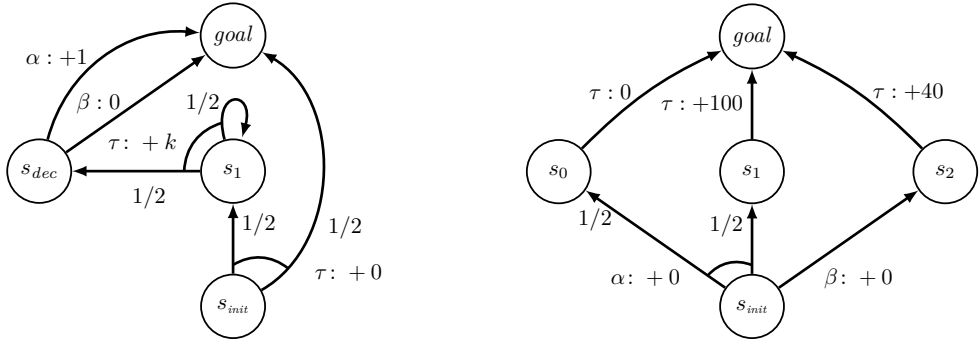

\begin{example}[\ERMin-schedulers]
\label{exam:minimization_SVPE}
Let $\lambda>0$ be a parameter for the SVPE.
Consider the MDP $\cM$ depicted in Figure \ref{fig:minimizing_SPE} where the weight  $k$ is some natural number $k>1/\lambda$.
First, observe that under any scheduler $\sched$, we have $k\leq \mathbb{E}^{\sched}_{\cM}(\rew) \leq k+1/2$.
Now, let  $\ell\geq 2$ be a natural number and let $\sched_p$ be a family of schedulers for $p\in[0,1]$ that behaves exactly the same on all paths except for the path that reaches $s_{\mathit{dec}}$ with
accumulated reward exactly $\ell \cdot k$. In this state, $\sched_p$ chooses $\alpha$ with probability $p$ and $\beta$ with probability $1-p$.

We now want to compare the SVPE of $\sched_p$ for $p>0$ to the SVPE of $\sched_0$. So, let $\lambda>0$ be given.
First, we define $E\coloneqq \mathbb{E}^{\sched_0}_{\cM} (\rew)$ and observe
$
\mathbb{E}^{\sched_p}_{\cM} (\rew) = E + \frac{p}{2^{\ell+1}}
$
as the path on which $\sched_p$ and $\sched_0$ differ has probability $\frac{1}{2^{\ell+1}}$.
Furthermore, both schedulers differ only on a path with a reward higher than the maximal possible expected accumulated reward, which is $k+1/2$. This means that the semivariance under $\sched_p$ will be larger as under $\sched_0$.
Note that exactly the outcomes with reward at most $k$ contribute to the semivariance and these outcomes have exactly the same probability under $\sched_p$ and $\sched_0$. However, the expected value under $\sched_p$ is higher.
We estimate
$
\SV^{\sched_p}_{\cM}(\rew) - \SV^{\sched_0}_{\cM}(\rew) \geq \frac{1}{2} (E + \frac{p}{2^{\ell+1}})^2 - \frac{1}{2}E^2
$
by only considering the increase in the squared distance from the mean for the outcome $0$ that occurs with probability $1/2$ under both schedulers. So, we can conclude
$\SV^{\sched_p}_{\cM}(\rew) - \SV^{\sched_0}_{\cM}(\rew) \geq \frac{1}{2} (\frac{2Ep}{2^{\ell+1}} +( \frac{p}{2^{\ell+1}})^2) \geq \frac{Ep}{2^{\ell+1}}$. 
For the SVPE, this implies
$
\SVPE[\lambda]^{\sched_p}_{\cM}(\rew) - \SVPE[\lambda]^{\sched_0}_{\cM}(\rew) \leq  \frac{p}{2^{\ell+1}} - \lambda \frac{Ep}{2^{\ell+1}}.
$
As $E\geq k$ and  $\lambda > 1/k$, the SVPE under scheduler $\sched_0$ is higher than under $\sched_p$.
Note that $\ell\geq 2$ was chosen arbitrarily. So, this argument shows that any scheduler can be improved by always scheduling $\beta$ in $s_{\mathit{dec}}$ as soon as the accumulated reward is at least $2k$.
\end{example}

For each $\lambda>0$, we have provided an MDP in which  optimal schedulers are necessarily \ERMin-schedulers.
This is exactly the undesirable behavior as for the VPE  we aim to overcome.
So, the SVPE is not a suitable alternative.

\begin{example}
\label{exam:randomization_SPE}
To conclude, we  show that randomization is necessary to maximize the SVPE.
Consider the MDP $\cM$ depicted in Figure \ref{fig:random_SPE}.
Let $\sched_p$ be the scheduler that chooses action $\alpha$ with probability $p$. Further, let $\lambda = \frac{1}{100}$.
We compute
$
\mathbb{E}^{\sched_p}_{\cM}(\rew) = 40+10p$.
Under $\sched_p$,  reward $40$ is accumulated with probability $1-p$ and  reward $0$ with probability $p/2$. So, we obtain
$
\SV^{\sched_p}_{\cM} (\rew) = (1-p)\cdot (10p)^2 + \frac{p}{2} \cdot (40+10p)^2 = 800 p + 500 p^2 - 50 p^3$.
Finally, we compute
$\mathbb{E}^{\sched_p}_{\cM}(\rew) - \lambda \SV^{\sched_p}_{\cM} (\rew)  = 40 +2p -5p^2 + \frac{1}{2} p^3$.
We determine the unique maximum of this expression on the interval $[0,1]$ 
at the zero of its derivative, which lies at $p \approx 0.206$.
So, randomization is necessary in order to maximize the SVPE in this MDP.

To conclude, let us  compute the variance to illustrate that randomization is not increasing the  VPE.
We obtain
$\Var_{\cM}^{\sched_p}(\rew) = \SV^{\sched_p}_{\cM} (\rew) + \frac{p}{2} (60-10p)^2 
= 2600p -100 p^2$. For the VPE for an arbitrary parameter $\lambda>0$, this results in
$
\mathbb{E}^{\sched_p}_{\cM}(\rew) - \lambda \Var^{\sched_p}_{\cM} (\rew)  = 40+10p -2600\lambda p +100 \lambda p^2.
$
Due to the positive coefficient in front of $p^2$ this is a parabola opened upwards. So, for any $\lambda$, one of the deterministic schedulers with $p=0$ or $p=1$ is optimal.

\end{example}

\section{Threshold-based penalty}
\label{sec:threshold}
	
	The MADPE penalizes outcomes below the expected value of the accumulated reward. The computation of the optimal MADPE via a quadratic program of exponential size, however, might not be feasible on large models.
	A conceptually simpler alternative, for which we will be able to provide a pseudo-polynomial optimization algorithm, is to externally fix a threshold $t$ and to penalize outcomes below this threshold $t$. 
	To this end, we define a threshold-based penalty  function $\TBP^\lambda_t\colon \mathbb{R}\to \mathbb{R}$ for parameters $\lambda, t >0$ by
	$
	\TBP^\lambda_t(x) = x - \lambda\cdot \max(t-x,0)$.
	
	This function returns $x$ if $x$ is at least $t$ and otherwise  penalizes the deviation below the value $t$ linearly with the penalty factor $\lambda$.
	In an MDP $\cM$, our goal is now to maximize -- by choosing a scheduler $\sched$ -- the threshold-based-penalized expectation (TBPE) 
	\[
	\mathbb{E}^{\sched}_{\cM}(\TBP^{\lambda}_t (\rew)) = \mathbb{E}^{\sched}_{\cM}(\rew) - \lambda \mathbb{E}^{\sched}_{\cM} (\max(t-\rew,0))
	\]
	Note that in a Markov chain $\cN$, the TBPE agrees with the SMADPE if we set $t=\mathbb{E}_{\cN}(\rew)$.
	
	The main theorem  is the following.  Omitted proofs can be found in Appendix \ref{app:threshold}.
	\begin{theorem}
	\label{thm:threshold}
	Let $\cM = (S,\Act, P,\sinit, \rew,\goal)$ be an MDP satisfying Assumption \ref{ass:1} and let  $t,\lambda>0$ be rationals.
	Then, $\mathbb{E}^{\max}_{\cM} (\TBP^\lambda_t (\rew))$ and an optimal scheduler can be computed in time polynomial in the size of $\cM$ and in the numerical value of $t$.
	\end{theorem}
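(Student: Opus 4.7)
The plan is to reduce the computation to the maximum expected total reward in an auxiliary MDP $\cN$ obtained by unfolding $\cM$ along the accumulated reward up to the threshold $t$. The central algebraic identity is
\[
    \TBP^\lambda_t(r) \;=\; r - \lambda \max(t-r,0) \;=\; r - \lambda t + \lambda \min(r,t),
\]
so that $\mathbb{E}^\sched_\cM(\TBP^\lambda_t(\rew)) = \mathbb{E}^\sched_\cM(\rew) + \lambda \mathbb{E}^\sched_\cM(\min(\rew,t)) - \lambda t$. The additive constant $-\lambda t$ plays no role in the optimisation, and the second summand telescopes once one tracks how much of the accumulated reward still lies below the cap: writing $w_i$ for the reward of the length-$i$ prefix of a finite $\cM$-path $\pi$, one has $\min(\rew(\pi),t) = \sum_{i\geq 0}\bigl(\min(w_{i+1},t) - \min(w_i,t)\bigr)$, a sum of non-negative local increments that vanish as soon as $w_i \geq t$.

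Based on this, set $T=\lceil t\rceil$ and build an MDP $\cN$ whose non-trap states are the pairs $(s,w)\in (S\setminus\{\goal\})\times\{0,1,\dots,T\}$, complemented by a fresh trap $\goal'$; from a non-trap state $(s,w)$, every action $\alpha\in\Act(s)$ is enabled and leads, under the probabilities $P(s,\alpha,s')$ of $\cM$, to $(s',\min(w+\rew(s,\alpha),T))$ if $s'\neq\goal$ and to $\goal'$ otherwise. Define the step reward in $\cN$ to absorb both the original reward and the local min-telescoping contribution,
\[
    \rew'\bigl((s,w),\alpha\bigr) \;=\; \rew(s,\alpha) \;+\; \lambda\bigl(\min(w+\rew(s,\alpha),t) - \min(w,t)\bigr),
\]
which evaluates to $(1+\lambda)\rew(s,\alpha)$ while the path is strictly below the threshold, to $\rew(s,\alpha)+\lambda(t-w)$ on the transition that first crosses it, and to $\rew(s,\alpha)$ afterwards. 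Memoryless schedulers of $\cN$ correspond bijectively to reward-based schedulers of $\cM$, and summing the defining equation for $\rew'$ along a path gives $\mathbb{E}^{\sched}_{\cN}(\rew') = \mathbb{E}^\sched_\cM(\rew) + \lambda\mathbb{E}^\sched_\cM(\min(\rew,t))$ on corresponding schedulers. Together with Lemma~\ref{lem:reward-based}, this reduces the optimisation of the TBPE on $\cM$ to the maximisation of $\mathbb{E}(\rew')$ on $\cN$, up to the constant $-\lambda t$.

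It remains to check that $\cN$ again satisfies Assumption~\ref{ass:1}: any end component of $\cN$ would project to an end component of $\cM$, of which only $\{\goal\}$ exists, so $\goal'$ is reached almost surely under every $\cN$-scheduler and $\mathbb{E}^{\max}_{\cN}(\rew')$ is finite (bounded by $(1+\lambda)\mathbb{E}^{\max}_\cM(\rew)$). The standard linear program (equivalently, value iteration) for the maximum expected total reward on such an MDP then computes $\mathbb{E}^{\max}_{\cN}(\rew')$ together with a memoryless deterministic optimal scheduler in time polynomial in $|\cN|$, and since $|\cN| = O(|\cM|\cdot(T+1))$ this is polynomial in $|\cM|$ and in the numerical value of $t$. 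Subtracting the constant $\lambda t$ yields $\mathbb{E}^{\max}_\cM(\TBP^\lambda_t(\rew))$, and pulling the optimal $\cN$-scheduler back to a reward-based scheduler of $\cM$ provides the desired optimal scheduler for the TBPE. The main point requiring care is the telescoping bookkeeping on the single transition where the accumulated reward first crosses a possibly non-integral threshold~$t$, together with verifying that the unfolded MDP $\cN$ really inherits Assumption~\ref{ass:1}; beyond these, the proof is a standard application of LP-based MDP optimisation.
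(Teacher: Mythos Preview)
Your proposal is correct and takes essentially the same approach as the paper: both unfold $\cM$ along the accumulated reward up to $\lceil t\rceil$ and telescope the penalty along paths, and your step reward $\rew'((s,w),\alpha)=\rew(s,\alpha)+\lambda\bigl(\min(w+\rew(s,\alpha),t)-\min(w,t)\bigr)$ is algebraically identical to the paper's $\TBP^\lambda_t(w+\rew(s,\alpha))-\TBP^\lambda_t(w)$; the only cosmetic difference is that the paper absorbs the constant $-\lambda t$ into an extra initial transition (so its $\rew'$ may be negative) whereas you subtract it at the end. One minor imprecision: memoryless $\cN$-schedulers correspond only to those reward-based $\cM$-schedulers depending on $(s,\min(w,T))$, not bijectively to all of them, but this does not affect the argument since the path-level correspondence already identifies \emph{arbitrary} schedulers of $\cM$ with arbitrary schedulers of $\cN$ and hence equates the two suprema.
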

	The theorem follows from the following lemma:
	\begin{restatable}{lemma}{lemunfoldingsimple}
	\label{lem:unfolding}
	Given $\cM$, $t$, and $\lambda$ as in Theorem \ref{thm:threshold}, we can construct an MDP $\cM^\prime$ with reward function $\rew^\prime$ (that takes rational rewards that may be negative) and with $|S|\cdot \lceil t \rceil$ many states in time polynomial in $|S|\cdot \lceil t \rceil$
	such that 
	$
	\mathbb{E}^{\max}_{\cM}(\TBP^\lambda_t (\rew)) = \mathbb{E}^{\max}_{\cM^\prime} (\rew^\prime)$.	
	\end{restatable}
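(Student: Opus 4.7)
The plan is to unfold the state space of $\cM$ along the accumulated reward up to the threshold $t$, and to precompute the optimal unrestricted expected reward $\mathbb{E}^{\max}_{\cM,s}(\rew)$ from each state $s \in S$ (finite and computable in polynomial time under Assumption~\ref{ass:1}). The key observation is that $\TBP^\lambda_t$ is a pure function of the total accumulated reward, and once a path has accumulated reward at least $t$, the penalty term vanishes forever, so from that moment the optimal continuation simply maximizes the expected remaining reward, which is a quantity that does not depend on the rest of the strategy and hence can be compiled into a terminal reward.

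Concretely, I would let
\[
S' = \bigl(S \times \{0, \ldots, \lceil t \rceil - 1\}\bigr) \cup \{s_{\mathrm{exc}} : s \in S\} \cup \{\goal'\}
\]
with initial state $(\sinit, 0)$. For $(s,w)$ with $s \neq \goal$ and $\alpha \in \Act(s)$, writing $r = \rew(s, \alpha)$: if $w + r < \lceil t \rceil$, set $P'((s,w), \alpha, (s', w+r)) = P(s, \alpha, s')$ and $\rew'((s,w),\alpha) = 0$; if $w + r \geq \lceil t \rceil$, set $P'((s,w), \alpha, s'_{\mathrm{exc}}) = P(s, \alpha, s')$ and $\rew'((s,w),\alpha) = w + r$. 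From each $s_{\mathrm{exc}}$ a single new action $\tau$ leads to $\goal'$ with reward $\mathbb{E}^{\max}_{\cM,s}(\rew)$, and from each $(\goal, w)$ a single action $\tau$ leads to $\goal'$ with (possibly negative) reward $w - \lambda(t - w)$, which equals $\TBP^\lambda_t(w)$ since $w \leq \lceil t \rceil - 1 < t$. Introducing the auxiliary states $s_{\mathrm{exc}}$ is precisely what lets us attach the correct, successor-dependent continuation value despite transition rewards in an MDP being unable to depend on the chosen successor.

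For correctness I would establish $\mathbb{E}^{\max}_{\cM}(\TBP^\lambda_t(\rew)) = \mathbb{E}^{\max}_{\cM'}(\rew')$ by two matching scheduler translations. Given any scheduler $\sched$ of $\cM$, the natural mimicking scheduler $\sched'$ in $\cM'$ contributes exactly $\TBP^\lambda_t(\rew)$ on every path ending in $\goal$ with reward below $\lceil t \rceil$, and on threshold-crossing paths it collects the accumulated reward at the crossing plus $\mathbb{E}^{\max}_{\cM,s'}(\rew)$ in $s'_{\mathrm{exc}}$; the latter dominates the continuation value $\sched$ actually realizes in $\cM$, which yields $\mathbb{E}^{\sched'}_{\cM'}(\rew') \geq \mathbb{E}^{\sched}_{\cM}(\TBP^\lambda_t(\rew))$ and hence $\mathbb{E}^{\max}_{\cM'}(\rew') \geq \mathbb{E}^{\max}_{\cM}(\TBP^\lambda_t(\rew))$. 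Conversely, any scheduler $\sched'$ of $\cM'$ induces a scheduler $\sched$ of $\cM$ that mimics $\sched'$ up to the crossing step and then switches to a memoryless expected-reward-optimal continuation, giving $\mathbb{E}^{\sched}_{\cM}(\TBP^\lambda_t(\rew)) = \mathbb{E}^{\sched'}_{\cM'}(\rew')$. The size of $\cM'$ is $|S|\cdot \lceil t \rceil + |S| + 1$ and everything, including the precomputation, is constructible in time polynomial in $|S| \cdot \lceil t \rceil$, matching the bound of the lemma up to lower-order terms.

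The main obstacle is not the unfolding itself but the mismatch between action-indexed rewards and successor-dependent continuation values: the continuation reward after a threshold crossing genuinely depends on the successor $s'$, which is why the $s_{\mathrm{exc}}$ states cannot simply be collapsed into a single trap without losing information. One also has to check that Assumption~\ref{ass:1} (a.s.\ reachability of $\goal$) transfers, ensuring $\goal'$ is reached almost surely in $\cM'$ so that $\mathbb{E}^{\max}_{\cM'}(\rew')$ is well-defined despite the negative rewards introduced by the penalty.
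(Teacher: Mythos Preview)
Your construction is correct, but it takes a different route from the paper's. The paper does \emph{not} precompute $\mathbb{E}^{\max}_{\cM,s}(\rew)$ or truncate the process at the threshold. Instead it keeps the full unfolding $S\times\{0,\dots,\lceil t\rceil\}$ (capping the reward coordinate at $\lceil t\rceil$ once exceeded) and assigns the \emph{telescoping} reward $\rew'((s,w),\alpha)=\TBP^\lambda_t(w+\rew(s,\alpha))-\TBP^\lambda_t(w)$, plus an initial step of reward $\TBP^\lambda_t(0)$. This yields an exact path-by-path identity $\rew'(\pi')=\TBP^\lambda_t(\rew(\pi))$ via a single bijection between paths, so equality of the maxima is immediate without any asymmetric scheduler-translation argument. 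Because $\TBP^\lambda_t$ has slope $1$ beyond $t$, the reward at the capped level $\lceil t\rceil$ collapses to the original $\rew(s,\alpha)$, which is why the reward coordinate need not be tracked further.

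Your approach buys a smaller effective dynamics after the crossing (one deterministic step to $\goal'$) at the cost of the precomputation and the two-direction inequality argument. The paper's approach buys a cleaner correctness proof and, more importantly, generality: because the telescoping trick uses nothing about $\TBP^\lambda_t$ beyond $m(x)=x$ for $x\ge t$, it immediately extends to arbitrary penalty functions of this shape (the paper's Remark~\ref{rem:unfolding}). Your construction, by contrast, relies essentially on the fact that after crossing the threshold the objective becomes the ordinary expected reward, so that the optimal continuation value is a single precomputable number; for a general $m$ this shortcut is not available.
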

	
	\begin{proof}[Proof sketch]
	The MDP $\cM^\prime$ is an unfolding of the MDP $\cM$ that keeps track of the accumulated reward until it exceeds $t$. So, states are extended with a second component specifying the reward accumulated so far. This second component does not change anymore once it reaches $t$. For a state action pair $((s,w),\alpha)$, the new reward function is defined as $\rew' ((s,w),\alpha) = \TBP^\lambda_t (w+\rew(s,\alpha))- \TBP^\lambda_t (w)$. 
	The initial state $(\sinit,0)$ is reached via one additional new transition with reward $\TBP^\lambda_t(0)$ (which is negative).  
	\end{proof}

	While $\cM^\prime$ constructed in this proof has a rational reward function that may be  negative, the MDP $\cM^\prime$ does not contain end components.
	Hence, the maximization of the expected accumulated reward in $\cM^\prime$ can  be carried out in polynomial time \cite{Bertsekas1991AnAO} leading to Theorem \ref{thm:threshold}.
	Furthermore, memoryless deterministic schedulers for $\cM^\prime$ are sufficient for the maximization. These schedulers correspond to deterministic, finite-memory \ERMax-schedulers for $\cM$.

	\begin{remark}
	\label{rem:unfolding}
	The proof of Lemma \ref{lem:unfolding} (and Thm. \ref{thm:threshold})  works  analogously for any penalty function that penalizes outcomes below $t$: for any function $m$ such that $m(x)=x$ for $x\geq t$ that is computable in polynomial time on natural numbers, we can construct $\cM^\prime$ with a reward function $\rew^\prime$ with $|S|\cdot \lceil t \rceil$ many states in time polynomial in $|S|\cdot \lceil t \rceil$
	such that 
	$
	\mathbb{E}^{\max}_{\cM}(m (\rew)) = \mathbb{E}^{\max}_{\cM^\prime} (\rew^\prime)$ (see Appendix \ref{app:threshold}).
	Again,   $\cM^\prime$ has no end components and  the maximal expected reward in $\cM^\prime$ can be computed in  time polynomial in the size of $\cM^\prime$ \cite{Bertsekas1991AnAO}. 
		\end{remark}
	
	Finally, we show a hardness result similar as  for the MADPE. 
	
	\begin{restatable}{theorem}{thmPPthreshold}
	Given an acyclic Markov chain $\cM=(S,P,\sinit,\rew)$ and  $\vartheta, t\in \mathbb{Q}$,  deciding whether
	$
	\mathbb{E}_{\cM} ( \TBP^1_t (\rew)) \geq \vartheta
	$
	is PP-hard under polynomial-time Turing reductions.
	\end{restatable}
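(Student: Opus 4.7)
The plan is to reduce from the PP-hard problem used in Theorem~\ref{thm:PP-hard_MAD}, taken from \cite{HaaseK15}: given an acyclic Markov chain $\cM = (S, P, \sinit, \rew)$ and a natural number $t$, decide whether $\Pr_{\cM}(\rew > t) \geq 1/2$. The overall strategy mirrors the MAD proof, and is in fact slightly simpler, since the TBPE is defined with an externally fixed threshold, so no auxiliary modification of $\cM$ is needed.

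The key algebraic observation is that because $\cM$ has integer-valued rewards, a direct case distinction on the piecewise-linear function $\TBP^1_s(x) = x - \max(s-x,0)$ shows that for every $x \in \mathbb{N}$,
\[
\TBP^1_{t+1}(x) - \TBP^1_t(x) = -\mathbb{1}[x \leq t]
\]
(for $x \geq t+1$ both sides equal $x$; for $x \leq t$ both values are of the form $2x - s$ for the corresponding threshold $s$ and differ by exactly $-1$). Taking expectations in $\cM$ yields
\[
\mathbb{E}_{\cM}(\TBP^1_t(\rew)) - \mathbb{E}_{\cM}(\TBP^1_{t+1}(\rew)) = \Pr_{\cM}(\rew \leq t),
\]
so $\Pr_{\cM}(\rew > t) \geq 1/2$ holds iff this difference is at most $1/2$.

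Exploiting this, the reduction first computes the exact rational value $v := \mathbb{E}_{\cM}(\TBP^1_t(\rew))$ by binary search using polynomially many calls to the TBPE threshold oracle. This is feasible because in an acyclic Markov chain with rational transition probabilities of polynomial bit-size and integer rewards, $v$ is a rational whose numerator and denominator admit polynomial bit-representations (it is a finite sum over maximal paths of products of such quantities). This is exactly the binary-search argument already used in the proof sketch of Theorem~\ref{thm:PP-hard_MAD}. Given $v$, the answer to ``$\Pr_{\cM}(\rew > t) \geq 1/2$'' is then obtained by a single additional oracle query ``$\mathbb{E}_{\cM}(\TBP^1_{t+1}(\rew)) \geq v - 1/2$?''.

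The main obstacle is the bit-complexity analysis that guarantees the binary search terminates within polynomially many queries. However, acyclicity bounds all maximal path lengths by $|S|$, and the path probabilities share a common denominator whose bit-length is polynomial in the size of $\cM$, so the required estimate follows from a routine computation and uses no machinery beyond what already underpins Theorem~\ref{thm:PP-hard_MAD}.
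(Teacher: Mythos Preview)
Your proposal is correct and follows essentially the same approach as the paper: both arguments reduce from the Haase--Kiefer problem by observing that the difference of the TBPE at two consecutive integer thresholds recovers a cumulative probability $\Pr_{\cM}(\rew \leq t)$, and both rely on the binary-search trick (identical to the one in Theorem~\ref{thm:PP-hard_MAD}) to compute exact TBPE values with polynomially many oracle calls. The paper phrases this via the auxiliary function $\mathrm{crinkle}_t^2(x)=\TBP^1_t(x)+t$ and uses thresholds $t-1,t$ while you use $t,t+1$ and work directly with $\TBP^1$; your final single-query shortcut in place of a second full binary search is a harmless minor optimization.
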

Note that this hardness result holds for a fixed parameter. The choice of this parameter $\lambda =1$  is arbitrary.  The proof works analogously for any positive parameter $\lambda>0$.

\section{Prototypical implementation and first experiments}
\label{sec:experiments}

	  \begin{figure}[t]
		  \begin{subfigure}[t]{0.45\textwidth}
		  \centering
    \resizebox{1\textwidth}{!}{%
    \begin{tikzpicture}
	\begin{axis}[
	scaled ticks=false,
	width=7cm,height=7cm,
	legend pos=south east,
		axis x line=center,
		axis y line=center,
		title={TBPE},
		xlabel={\#states},
		ylabel={total time [s]},
		xlabel style={below right},
		ylabel style={above},
		xmin=0,
		x tick label style={rotate=-30},
		ymin=0,
		samples=50,
		]
		\addplot+ [
		smooth,
		] coordinates {
			 (364172, 16.314641) (728172, 29.798032) (1092172, 46.457923) (1456172, 62.739622) (1820172, 86.553698) (2184172, 103.814610) (2548172, 115.758863) (2912172, 144.190125) (3276172, 160.438964) (3640172, 178.312471) (4004172, 203.361844) (4368172, 208.206041) (4732172, 238.259721) (5096172, 255.742996) (5460172, 288.362599) (5824172, 308.178525) (6188172, 346.520209) (6552172, 369.675766) (6916172, 356.527488) (7280172, 375.859814) (7644172, 420.506843) (8008172, 450.818255) (8372172, 479.715087) (8736172, 506.782537) (9100172, 533.211042) (9464172, 548.050680) (9828172, 556.085346) (10192172, 579.425285) (10556172, 550.214682) (10920172, 620.155372) (11284172, 629.182013) (11648172, 643.693275) (12012172, 644.781593) (12376172, 723.318762) (12740172, 782.481387) (13104172, 766.762190) (13468172, 797.831741) (13832172, 846.253185) (14196172, 847.982391) (14560172, 910.369410) (14924172, 915.761839) (15288172, 974.654952) (15652172, 947.835405) (16016172, 1012.419812) (16380172, 1036.674965) (16744172, 1036.700459) (17108172, 1027.873724) (17472172, 1146.769585) (17836172, 1186.243094) (18200172, 1286.438275)
		};
		\addlegendentry{N=3}
		\addplot+ [
		smooth,
		] coordinates {
			 (3172884, 127.662493) (6344884, 258.559369) (9516884, 390.120281) (12688884, 501.171399) (15860884, 676.275223) (19032884, 789.427323) (22204884, 947.116629) (25376884, 1067.317837) (28548884, 1171.995275) (31720884, 1358.117532) (34892884, 1474.764994) (38064884, 1639.419640)
		};
		\addlegendentry{N=4}
		\addplot+ [
		smooth,
		] coordinates {
			 (2733985, 84.295847) (5463885, 174.877353) (8193785, 254.727697) (10923685, 338.606079) (13653585, 424.494900) (16383485, 538.882196) (19113385, 640.328622) (21843285, 760.961085) (24573185, 812.540410) (27303085, 905.072612) (30032985, 965.206294) (32762885, 1010.867518) (35492785, 1077.796853)
		};
		\addlegendentry{N=5}
		\addplot+ [
		smooth,
		] coordinates {
			 (2387234, 45.550512) (4763794, 89.910067) (7140354, 142.068524) (9516914, 188.260565) (11893474, 261.510233) (14270034, 326.540951) (16646594, 377.538992) (19023154, 424.432298) (21399714, 470.483089) (23776274, 504.456896) (26152834, 573.357771) (28529394, 591.850729) (30905954, 622.936777) (33282514, 667.526728) (35659074, 696.733542) (38035634, 760.500449) (40412194, 800.481550) (42788754, 862.585719) (45165314, 909.402208) (47541874, 972.962292) (49918434, 1024.813779) (52294994, 1056.671422) (54671554, 1078.495180) (57048114, 1123.821595) (59424674, 1172.887974) (61801234, 1216.674977) (64177794, 1226.569383) (66554354, 1283.362471) (68930914, 1343.465717) (71307474, 1392.205142) (73684034, 1433.181150) (76060594, 1557.173916) (78437154, 1571.828488) (80813714, 1570.525641) (83190274, 1610.386967) (85566834, 1654.069574) (87943394, 1709.149576) (90319954, 1756.165319) (92696514, 1768.749286) (95073074, 1836.755114) (97449634, 1858.938921)
		};
		\addlegendentry{N=6}
		\addplot+ [
		smooth,
		] coordinates {
			 (10383259, 130.069916) (20862174, 257.190401) (31341089, 386.074433) (41820004, 504.084955) (52298919, 625.463649) (62777834, 738.282047) (73256749, 862.173280) (83735664, 960.015288) (94214579, 1093.009827) (104693494, 1225.397268) (115172409, 1363.055282) (125651324, 1484.405472) (136130239, 1616.312573) (146609154, 1776.084145)
		};
		\addlegendentry{N=7}
		\addplot+ [
		smooth,
		] coordinates {
			 (22406507, 292.956838) (35748098, 442.314397) (53691090, 603.885898) (72329350, 748.466027) (91003414, 896.992545) (109677898, 1110.267078) (128352382, 1245.100195) (147026866, 1410.967516) (165701350, 1624.614936) (184375834, 1783.756563) (203050318, 1888.187962) (221724802, 2142.773956) (240399286, 2385.431276)
		};
		\addlegendentry{N=8}
	\end{axis}
\end{tikzpicture}
    }

  \caption{The number of states of the unfolded MDPs and the time to compute the optimal TBPE for different parameter choices for the ALEP.}
   \label{fig:plot_TBPE}
  \end{subfigure}
  \hspace{12pt}
  \begin{subfigure}[t]{0.45\textwidth}
\centering
    \resizebox{1\textwidth}{!}{%
      \begin{tikzpicture}
    \begin{axis}[
    width=7cm,height=7cm,
    legend pos=north west,
        axis x line=center,
        axis y line*=left, 
        xlabel={\# states},
        ylabel={total time [s]},
        xlabel style={above left},
        ylabel style={above},
        yticklabel style={blue},
        xmin=0,
        ymin=0,
        samples=50,
        scaled y ticks=false, 
        title={MADPE},
        xmode=log,
        ymode=log
    ]
         \addplot+ [smooth,blue] coordinates {
      	(364, 0.1187) (3172, 0.9111) (27299, 12.33) (237656, 170.3)
      };\label{time}
      
      \addplot+ [smooth,blue] coordinates {
      	(272, 0.1210)
      	(528, 0.1710)
      	(1040, 0.1815)
      	(2064, 0.2790)
      	(4112, 0.4706)
      	(8208, 0.7750)
      	(16400, 1.631)
      	(32784, 3.305)
      	(65552, 6.605)
      	(131088, 14.89)
      	(262160, 39.94)
      };\label{time}
      
      \addplot+ [smooth,blue] coordinates {
      	(22656, 3.429)
      	(43136, 6.452)
      	(84096, 94.68)
      	(166016, 27.19)
      };\label{time}
      
      
      \addlegendimage{/pgfplots/refstyle=time}\addlegendentry{ALEP N=3,...,6 }
      \addlegendimage{/pgfplots/refstyle=states}\addlegendentry{RCP N=2}
      \addlegendimage{/pgfplots/refstyle=states}\addlegendentry{RCP N=4}
    \end{axis}
\end{tikzpicture}
    }

  \caption{Time to build and solve the quadratic program for the maximization of the MADPE.}
 \label{fig:plot_MADPE}
\end{subfigure}
\caption{Experimental evaluation of the algorithms for TBPE and MADPE.}
\vspace{-12pt}
\end{figure}
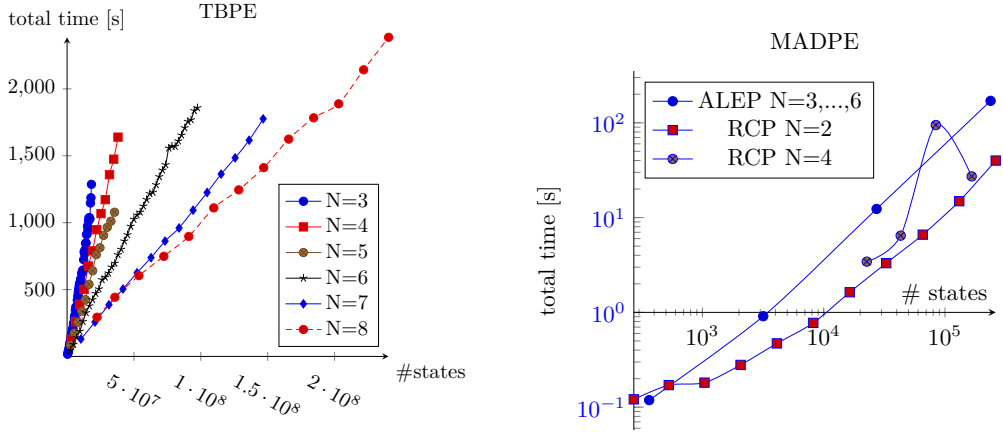

To give a prototypical proof-of-concept for the application of the MADPE and TBPE in practice, we run 
 experiments using   the model-checker PRISM  \cite{ChenForejt2013AModelChecker} and the optimization problem solver Gurobi \cite{gurobi}.
The source code for the experiments is available on github\footnote{\url{https://github.com/experiments-collection/risk-averse-stochastic-shortest-paths}}. All measurements were done on a machine running Windows 10 Pro 22H2 with an Intel Core i9-9900K CPU and 32GB RAM.
We use MDP models written in the PRISM input language (available on the PRISM website\footnote{\url{https://www.prismmodelchecker.org/}})  
for the asynchronous leader election protocol (ALEP) \cite{IR90} and, in the case of the MADPE, also for the randomized consensus protocol (RCP) \cite{AH90}. For both protocols, parameters can be chosen leading to models of different sizes and in the models for both protocols non-negative rewards are specified.

To test our algorithm for the TBPE, for each PRISM model for the ALEP with number of processes $N=3, \dots, 8$, we added a single module which implements the reward counter until reaching the threshold and  a new reward definition 
as in the construction used in the proof of Lemma \ref{lem:unfolding}. We used the penalty factor $\lambda=\frac{3}{2}$ in our examples and varied the threshold $t$. 
In Figure \ref{fig:plot_TBPE}, the sizes of the unfolded MDPs for varying values of $t$, which are proportional to  $t$, and the time needed to compute the maximal TBPE are shown. We observe that for this example the required time grows approximately linearly with the size of the unfolded MDP and consequently with the numerical value of $t$.  For the model with $N=8$, which has approximately $1.8\cdot 10^7$ many states, and $t=13$, the unfolded MDP has approximately $2.4\cdot 10^8$ many states and the computation of the optimal TBPE takes approximately $2385$ seconds.
More detailed plots for  different values for $N$ can be found in Appendix \ref{app:plots}.

To test our algorithm for the MADPE using quadratic programs, we use the  ALEP and the RCP models with various parameter choices. The parameter $\lambda$ is set to $0.4$. First we run PRISM to obtain a model representation with all  states, transitions, rewards and the maximal expected total reward from each state. Second, we run a python script which constructs all the constraints as described in Section \ref{sec:MAD} to obtain a linearly constrained program with a quadratic objective. The script uses Gurobi \cite{gurobi} to then solve the optimization problem.
The diagram in Figure \ref{fig:plot_MADPE} shows the total time for running the toolchain over the number of reachable states of each model according to PRISMs output. 
For the largest tested models with approximately $2\cdot 10^5$ many states, the maximal MADPE could be computed in less than $200$ seconds.

\section{Conclusion}
\label{sec:conclusion}

For various deviation measures, we investigated the deviation-measure-penalized expectation as risk-averse objective applied to the maximization of accumulated rewards in MDPs.
As  known from the literature, the VPE suffers from the fact that optimal schedulers have to be \ERMin-schedulers. Surprisingly, this can still be the case for the SVPE. For the MADPE, a different picture arises:
If the penalty factor $\lambda$ is at most $1/2$, optimal schedulers can be chosen to be \ERMax-schedulers. If $\lambda>1/2$, \ERMin-schedulers can be necessary.
Finally, the threshold-based penalty mechanism in the TBPE ensures that optimal schedulers are \ERMax-schedulers.
For an overview of the further results regarding computational complexity and the structure of optimal schedulers see Table  \ref{tbl:overview}.

Despite the PP-hardness results  for acyclic Markov chains, the first experimental evaluation of the two cases that ensure the existence of optimal \ERMax-schedulers, namely the TBPE in general and the MADPE for small penalty factors $\lambda \leq 1/2$,
indicates that the optimization seems to be possible in reasonable time on models of considerable size. Further experiments on the scalability of the algorithms, however, are left as future work.
 In addition, future experiments should  examine  whether the optimal schedulers for the different measures show a reasonable risk-averse behavior in case studies.

We addressed the maximization of accumulated rewards here. As we work with non-negative rewards, the case of minimization is not symmetric  and is subject to future investigations.
Finally, the studied objectives can be  transferred to other random variables such as the mean payoff, which is a further interesting direction for future work.

%
%
%
 \bibliographystyle{plainurl}
 \bibliography{references/lit2}

\clearpage
\begin{appendix}

\section{Omitted proofs of Section \ref{sec:prelim}}
\label{app:prelim}

\lemrewardbased*

\begin{proof}
For a scheduler $\sched$, a state $s$ and $w\in \mathbb{N}$,  let 
\[
\vartheta_{s,w}^{\sched}\coloneqq \mathbb{E}^{\sched}_{\cM} (\text{number of visits to state $s$ while the accumulated reward is  $w$})
\]
 be the expected frequency of visits to state $s$ with accumulated reward $w$. 
As $\goal$ is reached almost surely and is a trap state, these values are finite. So, by \cite[Lemma 2]{Piribauer2022TheVS}, there is a weight based scheduler $\tsched$ with $\vartheta_{s,w}^{\sched}= \vartheta_{s,w}^{\tsched}$
for all $s$ and $w$. The distribution of the random variable $\rew$ is  given by the values $\vartheta_{\goal, w}^{\sched}$ and $\vartheta_{\goal, w}^{\tsched}$ for $w\in \mathbb{N}$, respectively, because $\goal$ is reached almost surely and is a trap state 
so that the expected frequency of visits with accumulated reward $w$ equals the probability that the random variable $\rew$ equals $w$.
\end{proof}

\section{Omitted proofs and calculations of Section \ref{sec:MAD}}
\label{app:MAD}

\paragraph*{Calculations omitted in Example \ref{exam:minimizing}.}
We provide the calculations to determine the MAD of the scheduler $\sched$ in Example \ref{exam:minimizing}:

Observe that $\mathbb{E}^{\sched}_{\cM}(rew \mid \rew\geq 1) = \frac{1}{p}\cdot  \mathbb{E}^{\sched}_{\cM}(rew) $ as 
		\begin{align*}
		\mathbb{E}^{\sched}_{\cM}(rew) &=p \cdot  \mathbb{E}^{\sched}_{\cM}(rew \mid \rew\geq 1) +(1-p)\cdot  \mathbb{E}^{\sched}_{\cM}(rew \mid \rew =  0) \\
		&= p \cdot  \mathbb{E}^{\sched}_{\cM}(rew \mid \rew\geq 1).
		\end{align*}
		Hence, the MAD under this scheduler is 
		\begin{align*}
		&\mathbb{E}^{\sched}_{\cM} (|\rew - \mathbb{E}^{\sched}_{\cM}(rew)|)  \\
		{}={}&p \cdot  \mathbb{E}^{\sched}_{\cM} (\rew - \mathbb{E}^{\sched}_{\cM}(rew) \mid \rew\geq 1)  + (1-p) \cdot  \mathbb{E}^{\sched}_{\cM} ( \mathbb{E}^{\sched}_{\cM}(rew) -\rew \mid \rew= 0)  \\
		{}={} &   p \cdot  \mathbb{E}^{\sched}_{\cM} (\rew \mid \rew\geq 1) - p \cdot   \mathbb{E}^{\sched}_{\cM}(rew)  + (1-p) \cdot  \mathbb{E}^{\sched}_{\cM}(rew) \\
		{}={} & \frac{p}{p} \cdot  \mathbb{E}^{\sched}_{\cM}(rew) +(1-2p) \cdot   \mathbb{E}^{\sched}_{\cM}(rew)  =  2 \cdot (1-p) \cdot  \mathbb{E}^{\sched}_{\cM}(rew).
		\end{align*}

\thmMADoptimalsched*

		\begin{proof}
		 Let $\sched$ be a reward based scheduler and let 
		$\tsched$ be a memoryless deterministic scheduler with $\mathbb{E}^{\tsched}_{\cM,s}(\rew)=  \mathbb{E}^{\max}_{\cM,s}(\rew)$ for all states $s\in S$.

		Let $\psi$ be the event that a path $\pi$ has an accumulated reward of at least $k$. Then, we know that 
		\[
		E_1 := \mathbb{E}^{\sched}_{\cM}(\rew \mid \neg \psi) = \mathbb{E}^{\sched \uparrow_k \tsched }_{\cM}(\rew \mid \neg \psi) =:F_1
		\]
		and 
		\[
		E_2:= \mathbb{E}^{\sched}_{\cM}(\rew \mid  \psi) \leq \mathbb{E}^{\sched \uparrow_k \tsched }_{\cM}(\rew \mid  \psi) =: F_2.
		\]
		Furthermore,
		\[
		p:= \Pr^{\sched}_{\cM}(\psi)=  \Pr^{\sched \uparrow_k \tsched}_{\cM}(\psi).
		\]
		Note that the expected values satisfy
		$\mathbb{E}^{\sched \uparrow_k \tsched}_{\cM} (\rew) - \mathbb{E}^{\sched}_{\cM} (\rew) =  p\cdot (F_2-E_2)$.
		We now compare the MAD of both schedulers:
		\begin{align*}
		&\MAD^{\sched \uparrow_k \tsched}_{\cM}(\rew) - \MAD^{\sched}_{\cM}(\rew)  \\
		{}={}& \mathbb{E}^{\sched \uparrow_k \tsched}_{\cM}(|\rew -\mathbb{E}^{\sched \uparrow_k \tsched}_{\cM}(\rew)|) -  \mathbb{E}^{\sched }_{\cM}(|\rew -\mathbb{E}^{\sched}_{\cM}(\rew)|) \\
		{}={}&  p\cdot \left(\mathbb{E}^{\sched \uparrow_k \tsched}_{\cM}(|\rew -\mathbb{E}^{\sched \uparrow_k \tsched}_{\cM}(\rew)| \mid \psi) -  \mathbb{E}^{\sched }_{\cM}(|\rew -\mathbb{E}^{\sched}_{\cM}(\rew)|  \mid \psi ) \right) \\
		& +  (1-p) \cdot \left(\mathbb{E}^{\sched \uparrow_k \tsched}_{\cM}(|\rew -\mathbb{E}^{\sched \uparrow_k \tsched}_{\cM}(\rew)|\mid \neg \psi) -  \mathbb{E}^{\sched }_{\cM}(|\rew -\mathbb{E}^{\sched}_{\cM}(\rew)|  \mid \neg \psi )\right).
		\end{align*}
		Treating the two summands separately, we first obtain the following estimation using that if $\psi$ holds, the random variable $\rew$ takes only values larger than $\mathbb{E}^{\max}_{\cM} (\rew)$:
		\begin{align*}
		&  p\cdot \left(\mathbb{E}^{\sched \uparrow_k \tsched}_{\cM}(|\rew -\mathbb{E}^{\sched \uparrow_k \tsched}_{\cM}(\rew)| \mid \psi) -  \mathbb{E}^{\sched }_{\cM}(|\rew -\mathbb{E}^{\sched}_{\cM}(\rew)|  \mid \psi ) \right) \\
		{}={} &  p\cdot \left(\mathbb{E}^{\sched \uparrow_k \tsched}_{\cM}(\rew -\mathbb{E}^{\sched \uparrow_k \tsched}_{\cM}(\rew) \mid \psi) -  \mathbb{E}^{\sched }_{\cM}(\rew -\mathbb{E}^{\sched}_{\cM}(\rew)  \mid \psi ) \right) \\
		{}\leq{}  &  p\cdot \left(\mathbb{E}^{\sched \uparrow_k \tsched}_{\cM}(\rew -\mathbb{E}^{\sched}_{\cM}(\rew) \mid \psi) -  \mathbb{E}^{\sched }_{\cM}(\rew -\mathbb{E}^{\sched}_{\cM}(\rew)  \mid \psi ) \right) \\ 
		{}={} & p\cdot (F_2-E_2).
		\end{align*}
		Second, we observe
		\begin{align*}
		&(1-p) \cdot \left(\mathbb{E}^{\sched \uparrow_k \tsched}_{\cM}(|\rew -\mathbb{E}^{\sched \uparrow_k \tsched}_{\cM}(\rew)|\mid \neg \psi) -  \mathbb{E}^{\sched }_{\cM}(|\rew -\mathbb{E}^{\sched}_{\cM}(\rew)|  \mid \neg \psi )\right)\\
	\leq & (1-p) \cdot p \cdot (F_2-E_2)
		\end{align*}
	because the distribution of $\rew$ under the condition that $\neg \psi $ holds is the same under $\sched$ and $\sched \uparrow_k \tsched$. However, the expected values differ by $p \cdot (F_2-E_2)$. 
	For each single outcome $w$, the value $| w- \mathbb{E}^{\sched \uparrow_k \tsched}_{\cM}(\rew)|$ is hence at most $p \cdot (F_2-E_2)$ bigger than $| w- \mathbb{E}^{\sched }_{\cM}(\rew)|$ implying the stated inequality.

		Put together, this means 
		\[
		\MAD^{\sched \uparrow_k \tsched}_{\cM}(\rew) - \MAD^{\sched}_{\cM}(\rew)  \leq p\cdot (F_2-E_2) + (1-p) \cdot p \cdot (F_2-E_2) \leq 2 p \cdot  (F_2-E_2) .
		\]
		For the MADPE, this means
		\[
		\MADPE[\lambda]^{\sched \uparrow_k \tsched}_{\cM}(\rew) - \MADPE[\lambda]^{\sched}_{\cM}(\rew)  \geq   p\cdot (F_2-E_2) - \lambda \cdot 2 p \cdot  (F_2-E_2) \geq 0 
		\]
		as $\lambda \in (0,\frac{1}{2}]$. 
		  
		\end{proof}

		\lemcorrectnessconstruction*
		
		\begin{proof}
		The difference between the behavior under $\sched$ in $\cM$ and in $\cN$ is as follows:
		As soon as a reward $w\geq k$ is accumulated in $\cM$, i.e., when $(s,w)$ for this $w$ and some $s$ is reached, the reward $w+\mathbb{E}^{\max}_{\cM,s}(\rew)$ is received in $\cN$. In contrast, 
		in $\cM$, the process continues to accumulated further rewards in addition to the rewards of $w$ according to the memoryless scheduler $\tsched$. 
		If $\goal$ is reached with a reward $w^\prime < k$ in $\cM$, then the corresponding path in $\cN$ reaches $(\goal,w^\prime)$ and also receives reward $w^\prime$ in the next step. Vice versa,
		if $(\goal,w^\prime)$ is reached in $\cN$, the corresponding path in $\cM$ accumulates reward $w^\prime$.
		As $\mathbb{E}^{\tsched}_{\cM,s}(\rew) = \mathbb{E}^{\max}_{\cM,s}(\rew)$,
		this difference does not affect the expected value and so $\mathbb{E}^{\sched}_{\cM}(\rew) = \mathbb{E}^{\sched}_{\cN}(\rew)$.
		
		For the MAD, we provide the following computation. For this computation, let $A_{s,w}$ for $s\in S$ and $w\in \{k,\dots, k+\ell-1\}$ be the event that $s$ is the state that is reached as soon as the accumulated reward exceeds $k$ and that the accumulated reward at that point is $w$. Then, using the fact that $k\geq \mathbb{E}^{\sched}_{\cM}(\rew) = \mathbb{E}^{\sched}_{\cN}(\rew)$ and that the distribution of $\rew$ below $k$ are the same in $\cM$ and $\cN$ under $\sched$,
		\begin{align*}
		& \MAD^{\sched}_{\cM}(\rew) \\
		& =  \sum_{j=0}^\infty \Pr^{\sched}_{\cM}(\rew=j) \cdot | j - \mathbb{E}^{\sched}_{\cM}(\rew)| \\
		&= \sum_{j=0}^{k-1} \Pr^{\sched}_{\cN}(\rew=j) \cdot | j - \mathbb{E}^{\sched}_{\cN}(\rew)|  + \sum_{j=k}^\infty \Pr^{\sched}_{\cM}(\rew=j) \cdot | j - \mathbb{E}^{\sched}_{\cM}(\rew)| \\
		&= \sum_{j=0}^{k-1} \Pr^{\sched}_{\cN}(\rew=j) \cdot | j - \mathbb{E}^{\sched}_{\cN}(\rew)|  + \sum_{j=k}^\infty \Pr^{\sched}_{\cM}(\rew=j) \cdot ( j - \mathbb{E}^{\sched}_{\cM}(\rew)) .
		\end{align*}
		For the second summand, we get
		\begin{align*}
		 &\sum_{j=k}^\infty \Pr^{\sched}_{\cM}(\rew=j) \cdot ( j - \mathbb{E}^{\sched}_{\cM}(\rew)) \\
		&= \Pr^{\sched}_{\cM}(\rew\geq k) \cdot \mathbb{E}^{\sched}_{\cM} (\rew - \mathbb{E}^{\sched}_{\cM}(\rew) \mid \rew\geq k)  \\
		&= \sum_{s\in S, k\leq w \leq k+\ell-1}  \Pr^{\sched}_{\cM}(A_{s,w}) \cdot \mathbb{E}^{\sched}_{\cM} (\rew - \mathbb{E}^{\sched}_{\cM}(\rew) \mid A_{s,w}) \\
		&=  \sum_{s\in S, k\leq w \leq k+\ell-1}  \Pr^{\sched}_{\cM}(A_{s,w}) \cdot (w+\mathbb{E}^{\tsched}_{\cM,s} (\rew)  - \mathbb{E}^{\sched}_{\cM}(\rew) ) \\
		&=  \sum_{s\in S, k\leq w \leq k+\ell-1}  \Pr^{\sched}_{\cN}(\Diamond (s,w)) \cdot (w+\mathbb{E}^{\max}_{\cM,s} (\rew)  - \mathbb{E}^{\sched}_{\cN}(\rew) ) \\
		&= \Pr^{\sched}_{\cN}(\rew \geq k) \cdot (\rew -  \mathbb{E}^{\sched}_{\cN}(\rew) \mid \rew\geq k).
		\end{align*}
		Putting this together, we get 
		\begin{align*}
		&\MAD^{\sched}_{\cM}(\rew)  \\
		& = \sum_{j=0}^{k-1} \Pr^{\sched}_{\cN}(\rew=j) \cdot | j - \mathbb{E}^{\sched}_{\cN}(\rew)|  + \Pr^{\sched}_{\cN}(\rew \geq k) \cdot (\rew -  \mathbb{E}^{\sched}_{\cN}(\rew) \mid \rew\geq k)\\
		& =\MAD^{\sched}_{\cN}(\rew),
		\end{align*}
		which also shows that the MADPE is the same in $\cM$ and $\cN$ under $\sched$. 
		  
		\end{proof}

		\lemfrequency*

		\begin{proof}
		As shown in \cite[Theorem 4.7]{Kallenberg}, for any solution vector $x$ to the  system of constraints (\ref{eq:c1}) - (\ref{eq:c2}), there is a scheduler $\sched$ for $\cN$ such that
		the expected number of times that $\alpha$ is chosen in state $(s,w)$ under $\sched$ is equal to $x_{s,w,\alpha}$ -- and vice versa.
		As states $(s,w)$ with $s=\goal$ or $w\geq k$ can be visited at most once during a run and only $\tau$ is enabled in these states, these expected number of times that action $\tau$ is chosen in these states is exactly the probability that they are visited. The claim for the expected value follows directly from the definition of $\rew'$.
		  
		\end{proof}

		\thmPPMAD*
		
		\begin{proof}
		Let $\cM=(S,P,\sinit, \rew)$ be an acyclic Markov chain. As we assume all transition probabilities to be rational, the  product $L$ of all the denominators of the transition probabilities has the property that the probability of any path is a multiple of $1/L$.
		Furthermore, $L$ can be computed in polynomial time. We can also compute the value $E:=\mathbb{E}_{\cM}(\rew)$ in polynomial time as well as the maximal reward $K$ of a path.
		Now, $E$ is a multiple of $1/L$. So, for each path $\pi$, the value $|\rew(\pi) - E|$ is a multiple of $1/L$.
		So, the MAD $\MAD_{\cM} (\rew)= \sum_{\text{paths }\pi} P(\pi)\cdot |\rew(\pi) - E|$ is a multiple of $1/L^2$.
		As $\MAD_{\cM}$ is furthermore at most $K$, there are only $L^2\cdot K$ many possible values for $\MAD_{\cM}$.
		If we have access to an oracle for the threshold problem for the MAD, we can hence compute the exact value $\MAD_{\cM}$ via a binary search with $\lceil \log(L^2\cdot K) \rceil$ many calls to the oracle.
		As $L$ and $K$ can be computed in polynomial time, their logarithms are polynomial in the size of the Markov chain $\cM$.
		
		Knowing that we can use an oracle for the threshold problem to exactly compute the MAD in an acyclic Markov chain in polynomial time, we will now prove the PP-hardness of this threshold problem.
		We reduce from the following  problem that is shown to be PP-hard in \cite{HaaseK15}: 
		Given an acyclic Markov chain $\cM=(S,P,\sinit, \rew)$, and a natural number $t$, decide whether $\Pr_{\cM}(\rew>t) \geq 1/2$.
		
		So, let such a Markov chain $M$ and a natural number $t$ be given. Note that we can say that the reward function maps states to rewards in Markov chains as there are no actions.
		 Define again $E:= \mathbb{E}_{\cM}(\rew)$.
		First, we assume that $t\geq E$. The case that $t<E$ can be treated very similarly and will be described below.
		Now, we construct two acyclic Markov chains $\cM_1$ and $\cM_2$ as follows.
		In both Markov chains, we add a new initial state $\sinit^\prime$ with reward $0$ to $\cM$ from which the original initial state $\sinit$ is reached with probability $1/2$. Also with probability $1/2$, a new trap state $s$ is reached. 
		In $\cM_1$, the state $s$ has reward $2t-E$. So, $\mathbb{E}_{\cM_1}(\rew) = 1/2 \cdot E + 1/2\cdot (2t-E) = t$.
		In $\cM_2$, the state $s$ has reward $2t-E+1$. So, $\mathbb{E}_{\cM_2}(\rew) = t+1/2$.
		
		Now, we use the procedure described above with calls to an oracle for the threshold problem for MAD to compute the values $\MAD_{\cM_1}$ and $\MAD_{\cM_2}$.
		We know that 
		\[
		\MAD_{\cM_1}(\rew) = 1/2 |2t-E - t| + 1/2\cdot \sum_{w=0}^K \Pr_{\cM}(\rew = w) \cdot  |w-t|. 
		\]
		So, we can derive the value $\sum_{w=0}^K \Pr_{\cM}(\rew = w) \cdot  |w-t|$. 
		Likewise, in $\cM_2$, we get 
		\[
		\MAD_{\cM_2}(\rew) = 1/2 |2t-E+1 - t -1/2| + 1/2\cdot \sum_{w=0}^K \Pr_{\cM}(\rew = w) \cdot  |w-t-1/2|
		\]
		and can hence obtain the value $\sum_{w=0}^K \Pr_{\cM}(\rew = w) \cdot  |w-t-1/2|$.
		Now, observe that 
		\begin{align*}
			&	\sum_{w=0}^K \Pr_{\cM}(\rew = w) \cdot  |w-t|  - \sum_{w=0}^K \Pr_{\cM}(\rew = w) \cdot  |w-t-1/2| \\
			{}={} & \sum_{w=0}^t \Pr_{\cM}(\rew = w) \cdot  (t-w) + \sum_{w=t+1}^K \Pr_{\cM}(\rew = w) \cdot  (w-t)  \\
			&- \left(  \sum_{w=0}^t \Pr_{\cM}(\rew = w) \cdot  (t+1/2-w) + \sum_{w=t+1}^K \Pr_{\cM}(\rew = w) \cdot  (w-t-1/2) \right) \\
			{}={} &\sum_{w=0}^t \Pr_{\cM}(\rew = w) \cdot  (t-w) -\sum_{w=0}^t \Pr_{\cM}(\rew = w) \cdot  (t+1/2-w)  \\
			&+  \sum_{w=t+1}^K \Pr_{\cM}(\rew = w) \cdot  (w-t) - \sum_{w=t+1}^K \Pr_{\cM}(\rew = w) \cdot  (w-t-1/2) \\
			{}={} &\sum_{w=0}^t \Pr_{\cM}(\rew = w) \cdot (-1/2)  + \sum_{w=t+1}^K \Pr_{\cM}(\rew = w) \cdot 1/2 \\
			{}={} & 1/2\cdot (\Pr_{\cM}(\rew> t) - \Pr_{\cM}(\rew\leq t) ) \\
			{}={} & 1/2\cdot (\Pr_{\cM}(\rew> t) - (1-\Pr_{\cM}(\rew> t))) = \Pr_{\cM}(\rew> t) - 1/2.
		\end{align*}
		So, these two values allow us to compute $\Pr_{\cM}(\rew> t) $. Hence, we can answer whether $\Pr_{\cM}(\rew> t)\geq 1/2$.
		
		To complete the proof let us briefly sketch the case that $t<E$. 
		We can conduct a very similar construction as for $\cM_1$ and $\cM_2$ to ensure that the expected values in the resulting Markov chains are $t$ and $t+1/2$.
		E.g., we can construct $\cM_1^\prime$ by adding a new initial state to $\cM$ from which the original initial state is reached with probability $t/E$ and a terminal state with reward $0$ is reached with probability $1-t/E$.
		For $\cM_2$, we use probabilities $(t+1/2)/E$ and $1-(t+1/2)/E$.
		The remaining argument with slightly modified calculations can be carried out in exactly the same way.
		  
		\end{proof}

		\section{Computations omitted in Section \ref{sec:semivariance}}
		\label{app:SMAD}

		\noindent \textbf{Computation for Section \ref{sec:semiMAD}.}
		 For a random variable $X$ with expected value $E$ only taking natural numbers, we have
		\begin{align*}
		&\MAD(X) - 2 \SMAD(X)\\
		 &= \sum_{k=0}^{\lfloor E \rfloor} \Pr(X=k)(E-k) + \sum_{k=\lceil E \rceil}^{\infty } \Pr(X=k)(k-E) - 2  \sum_{k=0}^{\lfloor E \rfloor } \Pr(X=k)(E-k) \\
		&= \sum_{k=\lceil E \rceil}^{\infty } \Pr(X=k)(k-E) - \sum_{k=0}^{\lfloor E \rfloor } \Pr(X=k)(E-k) \\
		&= \sum_{k=0}^{\infty }\Pr(X=k)(k-E) = \left( \sum_{k=0}^{\infty }\Pr(X=k)k\right) - E = 0.
		\end{align*}

		\section{Omitted proofs of Section \ref{sec:threshold}}
		\label{app:threshold}
		
		We are going to prove the following lemma from the main paper that we restate here in a more general form as described in Remark \ref{rem:unfolding}.
		
		\lemunfoldingsimple*
		
		For the more general form, let $\cM$ and $t$ be as above and let 
		  $m$ be a function with $m(x) = x$ for $x\geq t$ that is computable in polynomial time on natural numbers. Note that $\TBP^\lambda_t$ is such a function.

		\begin{restatable}{lemma}{lemunfolding}
	Given $\cM$, $t$, and $m$ as above, we can construct an MDP $\cM^\prime$ with reward function $\rew^\prime$ and with $|S|\cdot \lceil t \rceil$ many states in polynomial time
	such that 
	\[
	\mathbb{E}^{\max}_{\cM}(m\circ \rew) = \mathbb{E}^{\max}_{\cM^\prime} (\rew^\prime).
	\]
	\end{restatable}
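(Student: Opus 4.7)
The plan is to construct $\cM' = (S', \Act, P', \sinit', \rew', \goal')$ as a reward-unfolding of $\cM$ in which each state records the accumulated reward so far, saturated at $\lceil t \rceil$. Take $S' = S \times \{0, 1, \dots, \lceil t \rceil - 1\} \cup S$, where the bare states $s \in S$ form a single saturated layer representing ``accumulated reward has reached $\lceil t \rceil$ or more''; this gives $|S| \cdot (\lceil t \rceil + 1)$ states, of the order stated. The initial state is $\sinit' = (\sinit, 0)$, and transitions inherit from $\cM$ via $P'((s, w), \alpha, (s', w')) = P(s, \alpha, s')$ with $w' = w + \rew(s, \alpha)$ if this sum is less than $\lceil t \rceil$, and otherwise the target is the bare state $s'$; transitions out of bare states mirror those of $\cM$ directly.

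The key step is to define $\rew'$ so that the telescoping sum of incremental rewards along a path equals $m \circ \rew$. I set
\[
\rew'\bigl((s, w), \alpha\bigr) = m\bigl(w + \rew(s, \alpha)\bigr) - m(w),
\]
with the convention that on bare states $\rew'(s, \alpha) = \rew(s, \alpha)$, which is consistent with the telescoping since $m$ is the identity on values $\geq t$. For any finite path $\pi$ in $\cM$, the corresponding path $\pi'$ in $\cM'$ then satisfies $\rew'(\pi') = m(\rew(\pi)) - m(0)$ by telescoping. Prepending to $\cM'$ one new initial state $\sinit'$ with a deterministic transition into $(\sinit, 0)$ of reward $m(0)$ restores the identity $\rew'(\pi') = m(\rew(\pi))$ exactly.

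Correctness then follows from a bijective scheduler correspondence. By Lemma \ref{lem:reward-based} it suffices to consider reward-based schedulers in $\cM$ when maximizing $\mathbb{E}(m \circ \rew)$. Such a scheduler is precisely a map from pairs $(s, w)$ to distributions over actions, and it corresponds to a memoryless scheduler in $\cM'$ because the state of $\cM'$ encodes exactly this pair (in the saturated layer, where $m$ acts as the identity, only the current state matters and a memoryless choice is optimal). Under this correspondence, the telescoping identity yields $\mathbb{E}^\sched_\cM(m \circ \rew) = \mathbb{E}^{\sched'}_{\cM'}(\rew')$, and taking suprema gives the claimed equality of maxima. Since $m$ is polynomial-time computable on the relevant bounded inputs and the construction is local, $\cM'$ is built in time polynomial in $|S| \cdot \lceil t \rceil$.

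The main delicate point I anticipate is verifying that collapsing all values of the reward counter above $\lceil t \rceil$ into a single saturated layer does not change the optimum: once the accumulated reward has reached $t$, the increment $m(w + \rew(s, \alpha)) - m(w)$ equals $\rew(s, \alpha)$ independently of $w$ by the identity assumption on $m$, so the remaining optimization reduces to maximizing the expected future reward in $\cM$ from the current state, for which no further unfolding is needed. This is what keeps the state count linear in $\lceil t \rceil$ rather than dependent on the total achievable reward, and it is the step where the hypothesis $m(x) = x$ for $x \geq t$ is genuinely used.
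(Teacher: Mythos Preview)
Your proposal is correct and matches the paper's construction essentially exactly: the same reward-saturated unfolding with states $S\times\{0,\dots,\lceil t\rceil\}$, the same telescoping reward $\rew'((s,w),\alpha) = m(w+\rew(s,\alpha)) - m(w)$, and the same prepended initial transition carrying reward $m(0)$. The only cosmetic difference is that the paper establishes the scheduler correspondence directly via the one-to-one correspondence between paths (and hence between arbitrary history-dependent schedulers) in $\cM$ and $\cM'$, whereas you route through Lemma~\ref{lem:reward-based} and an optimality argument in the saturated layer; both work, but the path-level bijection is slightly more direct.
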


		\begin{proof}
	In the sequel, we provide an unfolding $\cM^\prime=(S^\prime,\Act^\prime,P^\prime, \sinit^\prime, \rew^\prime, \goal^\prime)$ of $\cM$ with a new reward function $\rew^\prime$.
	
	The state space $S^\prime$ is given by $S\times \{0,\dots, \lceil t \rceil \}\cup\{\sinit^\prime\}$ where $\sinit^\prime$ is a new initial state. The set of actions is extended by one new action $\tau$, so $\Act^\prime = \Act \cup \{\tau\}$.
	For $(s,w)$ and $(t,v)$ in $S^\prime$ and $\alpha\in \Act$, we define
	\[
	P^\prime((s,w),\alpha, (r,v)) = \begin{cases}
	P(s,\alpha,r) & \text{ if $v=w+\rew(s,\alpha)$} \\
	& \text{ or $v=\lceil t \rceil$ and $w+\rew(s,\alpha)\geq t$,} \\
	0 & \text{ otherwise.}
	\end{cases}
	\]
	In the initial state $\sinit$ only $\tau$ is enabled which leads to $(\sinit,0)$ with probability $1$.
	 The new \emph{rational} reward function $\rew^\prime$ is given by
	\[
	\rew^\prime((s,w),\alpha) = m(w+\rew(s,\alpha))-m(w)
	\]
	for $(s,w)\in S^\prime$ and $\alpha\in \Act$.
	Note that this means that $\rew^\prime((s,\lceil t \rceil), \alpha) = \rew(s,\alpha)$ for all $s$ and $\alpha$.
	This is also the reason why we do not have to track the reward accumulated so far above $\lceil t \rceil$.
	Further $\rew^\prime (\sinit^\prime, \tau) = m(0)$.
	
	Finally, in order to obtain only one trap state $\goal^\prime$, we could collapse all states of the form $(\goal,w)$ to a single trap state.
	For convenience, however, we keep this set of trap states unchanged in $\cM^\prime$.
	
	Clearly, there is now a 1-to-1-correspondence between paths in $\cM$ and $\cM^\prime$. 
	For a path $s_0s_1\dots$ in $\cM$, we can obtain a path in $\cM^\prime$ by adding the initial state $\sinit^\prime$ in the beginning and afterwards adding the reward that has been accumulated so far up to $\lceil t \rceil$ as a second component to the states.
	Vice versa, for a path $\sinit^\prime (s_0,w_0)(s_1,w_1)\dots$ in $\cM^\prime$, we obtain the corresponding path in $\cM$ by projection onto the states $s_0s_1\dots$.
	For a path $\pi$ in $\cM$ and the corresponding path $\pi^\prime$ in $\cM^\prime$, we observe that the definition of $\rew^\prime$ ensures that
	\[
	m \circ \rew(\pi) = \rew^\prime (\pi^\prime).
	\]

	As the same actions are enabled in a state $(s,w)$ in $\cM^\prime$ and in state $s$ in $\cM$, there is furthermore a 1-to-1-correspondence between schedulers for $\cM$ and $\cM^\prime$.
	For each scheduler $\sched$ viewed as a scheduler for both MDPs, we get
	\[
	\mathbb{E}^{\sched}_{\cM}(m\circ \rew) = \mathbb{E}^{\sched}_{\cM^\prime}( \rew^\prime).
	\]
	  
	\end{proof}

	\thmPPthreshold*
	
	\begin{proof}
	We first show that for any $t$ we can write the probability $\mathbb{Pr}(\rew(\pi) \geq t)$ that a random path $\pi$ as the difference of expected values of the threshold-based penalized expectation:
	\begin{equation*}
		\mathbb{Pr}(\rew(\pi) \geq t) = \mathbb{E}_{\cM} ( \mathrm{crinkle}_{t}^{2} \circ \rew) - \mathbb{E}_{\cM} ( \mathrm{crinkle}_{t-1}^{2} \circ \rew)
	\end{equation*}
	To show the equation above, we first note that since $\mathrm{crinkle}_t^2$ is defined
	\begin{equation}
		\mathrm{crinkle}_t^2 (x) = \begin{cases}
			2x & \text{(if $x < t$)}\\
			x+t & \text{(if $x \geq t$)}
		\end{cases}\text{,}
	\end{equation}
	we obtain $\mathrm{crinkle}_t^2 (x) - \mathrm{crinkle}_{t-1}^2 (x) = 0$ for $x\leq t-1$ while $\mathrm{crinkle}_t^2 (x) - \mathrm{crinkle}_{t-1}^2 (x) = 1$ for $x\geq t$.
	Hence the probability $\mathbb{Pr}(\rew(\pi) \geq t)$ can be seen as
	\begin{align*}
		&\mathbb{Pr}(\rew(\pi) \geq t)\\
		=& \sum_{\pi \in \mathrm{Paths}^{\mathrm{max}(\cM)}}{P(\pi) \cdot \big(
		 		\mathrm{crinkle}_t^2 (\rew(\pi)) - \mathrm{crinkle}_{t-1}^2 (\rew(\pi))\big)}\\
	 	=& \sum_{\pi \in \mathrm{Paths}^{\mathrm{max}(\cM)}}{P(\pi) \cdot 
	 		\mathrm{crinkle}_t^2 (\rew(\pi))} -\sum_{\pi \in \mathrm{Paths}^{\mathrm{max}(\cM)}}{P(\pi) \cdot \mathrm{crinkle}_{t-1}^2 (\rew(\pi))}\\
 		=& \mathbb{E}_{\cM}(\mathrm{crinkle}_t^2 \circ \rew) - \mathbb{E}_{\cM}(\mathrm{crinkle}_{t-1}^2 \circ \rew)\text{.}
	\end{align*}
	The remainder of the proof is now analogous to the proof of Theorem \ref{thm:PP-hard_MAD}:
	In acyclic Markov chains, we can compute a number $L$ in polynomial time such that the probability of all paths is a multiple of $1/L$, 
	namely we can choose $L$ to be the product of all denominators of all transition probabilities.
	Furthermore the expected value of $\mathrm{crinkle}_t^2 \circ \rew$ is at most twice the maximal reward $R$ of a path in the Markov chain. So, there are only 
	$2RL$ many possible values for this expected value. Using the threshold problem for this expected value as an oracle, 
	we can compute the exact expected value as in the proof of Theorem  \ref{thm:PP-hard_MAD} via a binary search in polynomial time.
	
	This completes the  polynomial-time Turing reduction showing that the threshold problem is PP-hard under such reductions. 
	  
	\end{proof}

\section{Experimental evaluation}
\label{app:plots}

In the sequel, the number of states of the unfolded MDPs as well as the time to compute the maximal TBPE as described in Section \ref{sec:experiments} are depicted for the Asynchronous Leader Election Protocol with parameter
$N=3,\dots, 8$ and varying values of the paramter $t$. The number of states of the unfolded MDPs grows linearly in $t$ as expected. Interestingly, also the required times seem to grow linearly in $t$.

\begin{tikzpicture}
    \begin{axis}[
    width=0.7\textwidth,
    legend pos=north west,
        axis x line=center,
        axis y line*=left, 
        xlabel={parameter $t$ },
        ylabel={total time [s]},
        xlabel style={above left},
        ylabel style={above},
        yticklabel style={blue},
         xticklabel style={rotate=-90},
        xmin=0,
        ymin=0,
        samples=50,
        scaled x ticks=false,
        scaled y ticks=false, 
        title={$N=3$},
    ]
        \addplot+ [smooth,blue] coordinates {
            (1000, 16.314641) (2000, 29.798032) (3000, 46.457923) (4000, 62.739622) (5000, 86.553698) (6000, 103.814610) (7000, 115.758863) (8000, 144.190125) (9000, 160.438964) (10000, 178.312471) (11000, 203.361844) (12000, 208.206041) (13000, 238.259721) (14000, 255.742996) (15000, 288.362599) (16000, 308.178525) (17000, 346.520209) (18000, 369.675766) (19000, 356.527488) (20000, 375.859814) (21000, 420.506843) (22000, 450.818255) (23000, 479.715087) (24000, 506.782537) (25000, 533.211042) (26000, 548.050680) (27000, 556.085346) (28000, 579.425285) (29000, 550.214682) (30000, 620.155372) (31000, 629.182013) (32000, 643.693275) (33000, 644.781593) (34000, 723.318762) (35000, 782.481387) (36000, 766.762190) (37000, 797.831741) (38000, 846.253185) (39000, 847.982391) (40000, 910.369410) (41000, 915.761839) (42000, 974.654952) (43000, 947.835405) (44000, 1012.419812) (45000, 1036.674965) (46000, 1036.700459) (47000, 1027.873724) (48000, 1146.769585) (49000, 1186.243094) (50000, 1286.438275) (100000, 2732.295688) (150000, 4867.411873) (200000, 6713.309162)
        };\label{time}
    \end{axis}

    \begin{axis}[
    width=0.7\textwidth,
    legend pos=north west,
        axis x line=center,
        axis y line*=right, 
        ylabel={\#states },
        xlabel style={below right},
        ylabel style={above right},
        xmin=0,
        ymin=0,
        samples=50,
        scaled y ticks=false, 
        axis y line=right, 
        xtick=\empty, 
        yticklabel style={red}, 
    ]
    
    \addplot+ [
		smooth,
		] coordinates { (0,0) };

        \addplot+ [smooth, red] coordinates {
             (1000, 364172.00) (2000, 728172.00) (3000, 1092172.00) (4000, 1456172.00) (5000, 1820172.00) (6000, 2184172.00) (7000, 2548172.00) (8000, 2912172.00) (9000, 3276172.00) (10000, 3640172.00) (11000, 4004172.00) (12000, 4368172.00) (13000, 4732172.00) (14000, 5096172.00) (15000, 5460172.00) (16000, 5824172.00) (17000, 6188172.00) (18000, 6552172.00) (19000, 6916172.00) (20000, 7280172.00) (21000, 7644172.00) (22000, 8008172.00) (23000, 8372172.00) (24000, 8736172.00) (25000, 9100172.00) (26000, 9464172.00) (27000, 9828172.00) (28000, 10192172.00) (29000, 10556172.00) (30000, 10920172.00) (31000, 11284172.00) (32000, 11648172.00) (33000, 12012172.00) (34000, 12376172.00) (35000, 12740172.00) (36000, 13104172.00) (37000, 13468172.00) (38000, 13832172.00) (39000, 14196172.00) (40000, 14560172.00) (41000, 14924172.00) (42000, 15288172.00) (43000, 15652172.00) (44000, 16016172.00) (45000, 16380172.00) (46000, 16744172.00) (47000, 17108172.00) (48000, 17472172.00) (49000, 17836172.00) (50000, 18200172.00) (100000, 36400172.00) (150000, 54600172.00) (200000, 72800172.00)
        };\label{states}
         \addlegendimage{/pgfplots/refstyle=time}\addlegendentry{total time [s]}
         \addlegendimage{/pgfplots/refstyle=states}\addlegendentry{\# states}
              \end{axis}
\end{tikzpicture}

%
%
%

\begin{tikzpicture}
    \begin{axis}[
    width=0.7\textwidth,
    legend pos=north west,
        axis x line=center,
        axis y line*=left, 
        xlabel={parameter $t$ },
        ylabel={total time [s]},
        xlabel style={above left},
        ylabel style={above},
        yticklabel style={blue},
        xmin=0,
        ymin=0,
        samples=50,
         scaled x ticks=false,
        scaled y ticks=false, 
        title={$N=4$},
    ]
        \addplot+ [smooth,blue] coordinates {
           (1000, 127.662493) (2000, 258.559369) (3000, 390.120281) (4000, 501.171399) (5000, 676.275223) (6000, 789.427323) (7000, 947.116629) (8000, 1067.317837) (9000, 1171.995275) (10000, 1358.117532) (11000, 1474.764994) (12000, 1639.419640)
        };\label{time}
    \end{axis}

    \begin{axis}[
    width=0.7\textwidth,
    legend pos=north west,
        axis x line=center,
        axis y line*=right, 
        ylabel={\#states },
        xlabel style={below right},
        ylabel style={above right},
        xmin=0,
        ymin=0,
        samples=50,
        scaled y ticks=false, 
        axis y line=right, 
        xtick=\empty, 
        yticklabel style={red}, 
    ]
    
    \addplot+ [
		smooth,
		] coordinates { (0,0) };

        \addplot+ [smooth, red,y filter/.code={\pgfmathparse{\pgfmathresult*10000.}\pgfmathresult}
        ] coordinates {
             (1000, 317.288400) (2000, 634.488400) (3000, 951.688400) (4000, 1268.888400) (5000, 1586.088400) (6000, 1903.288400) (7000, 2220.488400) (8000, 2537.688400) (9000, 2854.888400) (10000, 3172.088400) (11000, 3489.288400) (12000, 3806.488400)
        };\label{states}
         \addlegendimage{/pgfplots/refstyle=time}\addlegendentry{total time [s]}
         \addlegendimage{/pgfplots/refstyle=states}\addlegendentry{\# states}
              \end{axis}
\end{tikzpicture}

%
%
%

\begin{tikzpicture}
    \begin{axis}[
    width=0.7\textwidth,
    legend pos=north west,
        axis x line=center,
        axis y line*=left, 
        xlabel={parameter $t$ },
        ylabel={total time [s]},
        xlabel style={above left},
        ylabel style={above},
        yticklabel style={blue},
        xmin=0,
        ymin=0,
        samples=50,
        scaled y ticks=false, 
        title={$N=5$},
    ]
        \addplot+ [smooth,blue] coordinates {
           (100, 84.295847) (200, 174.877353) (300, 254.727697) (400, 338.606079) (500, 424.494900) (600, 538.882196) (700, 640.328622) (800, 760.961085) (900, 812.540410) (1000, 905.072612) (1100, 965.206294) (1200, 1010.867518) (1300, 1077.796853)
        };\label{time}
    \end{axis}

    \begin{axis}[
    width=0.7\textwidth,
    legend pos=north west,
        axis x line=center,
        axis y line*=right, 
        ylabel={\#states },
        xlabel style={below right},
        ylabel style={above right},
        xmin=0,
        ymin=0,
        samples=50,
        scaled y ticks=false, 
        axis y line=right, 
        xtick=\empty, 
        yticklabel style={red}, 
    ]
    
    \addplot+ [
		smooth,
		] coordinates { (0,0) };

        \addplot+ [smooth, red,y filter/.code={\pgfmathparse{\pgfmathresult*10000.}\pgfmathresult}
        ] coordinates {
             (100, 273.398500) (200, 546.388500) (300, 819.378500) (400, 1092.368500) (500, 1365.358500) (600, 1638.348500) (700, 1911.338500) (800, 2184.328500) (900, 2457.318500) (1000, 2730.308500) (1100, 3003.298500) (1200, 3276.288500) (1300, 3549.278500)
        };\label{states}
         \addlegendimage{/pgfplots/refstyle=time}\addlegendentry{total time [s]}
         \addlegendimage{/pgfplots/refstyle=states}\addlegendentry{\# states}
              \end{axis}
\end{tikzpicture}

\begin{tikzpicture}
    \begin{axis}[
    width=0.7\textwidth,
    legend pos=north west,
        axis x line=center,
        axis y line*=left, 
        xlabel={parameter $t$ },
        ylabel={total time [s]},
        xlabel style={above left},
        ylabel style={above},
        yticklabel style={blue},
        xmin=0,
        ymin=0,
        samples=50,
        scaled y ticks=false, 
        title={$N=6$},
    ]
        \addplot+ [smooth,blue] coordinates {
           (10, 45.550512) (20, 89.910067) (30, 142.068524) (40, 188.260565) (50, 261.510233) (60, 326.540951) (70, 377.538992) (80, 424.432298) (90, 470.483089) (100, 504.456896) (110, 573.357771) (120, 591.850729) (130, 622.936777) (140, 667.526728) (150, 696.733542) (160, 760.500449) (170, 800.481550) (180, 862.585719) (190, 909.402208) (200, 972.962292) (210, 1024.813779) (220, 1056.671422) (230, 1078.495180) (240, 1123.821595) (250, 1172.887974) (260, 1216.674977) (270, 1226.569383) (280, 1283.362471) (290, 1343.465717) (300, 1392.205142) (310, 1433.181150) (320, 1557.173916) (330, 1571.828488) (340, 1570.525641) (350, 1610.386967) (360, 1654.069574) (370, 1709.149576) (380, 1756.165319) (390, 1768.749286) (400, 1836.755114) (410, 1858.938921)
                   };\label{time}
    \end{axis}

    \begin{axis}[
    width=0.7\textwidth,
    legend pos=north west,
        axis x line=center,
        axis y line*=right, 
        ylabel={\#states },
        xlabel style={below right},
        ylabel style={above right},
        xmin=0,
        ymin=0,
        samples=50,
        scaled y ticks=false, 
        axis y line=right, 
        xtick=\empty, 
        yticklabel style={red}, 
    ]
    
    \addplot+ [
		smooth,
		] coordinates { (0,0) };

        \addplot+ [smooth, red,y filter/.code={\pgfmathparse{\pgfmathresult*10000.}\pgfmathresult}
        ] coordinates {
             (10, 238.723400) (20, 476.379400) (30, 714.035400) (40, 951.691400) (50, 1189.347400) (60, 1427.003400) (70, 1664.659400) (80, 1902.315400) (90, 2139.971400) (100, 2377.627400) (110, 2615.283400) (120, 2852.939400) (130, 3090.595400) (140, 3328.251400) (150, 3565.907400) (160, 3803.563400) (170, 4041.219400) (180, 4278.875400) (190, 4516.531400) (200, 4754.187400) (210, 4991.843400) (220, 5229.499400) (230, 5467.155400) (240, 5704.811400) (250, 5942.467400) (260, 6180.123400) (270, 6417.779400) (280, 6655.435400) (290, 6893.091400) (300, 7130.747400) (310, 7368.403400) (320, 7606.059400) (330, 7843.715400) (340, 8081.371400) (350, 8319.027400) (360, 8556.683400) (370, 8794.339400) (380, 9031.995400) (390, 9269.651400) (400, 9507.307400) (410, 9744.963400)
        };\label{states}
         \addlegendimage{/pgfplots/refstyle=time}\addlegendentry{total time [s]}
         \addlegendimage{/pgfplots/refstyle=states}\addlegendentry{\# states}
              \end{axis}
\end{tikzpicture}

%
%
%
%
%
%

\begin{tikzpicture}
    \begin{axis}[
    width=0.7\textwidth,
    legend pos=north west,
        axis x line=center,
        axis y line*=left, 
        xlabel={parameter $t$ },
        ylabel={total time [s]},
        xlabel style={above left},
        ylabel style={above},
        yticklabel style={blue},
        xmin=0,
        ymin=0,
        samples=50,
        scaled y ticks=false, 
        title={$N=7$},
    ]
        \addplot+ [smooth,blue,y filter/.code={\pgfmathparse{\pgfmathresult*1000.}\pgfmathresult}] coordinates {
			 (5, 0.130070) (10, 0.257190) (15, 0.386074) (20, 0.504085) (25, 0.625464) (30, 0.738282) (35, 0.862173) (40, 0.960015) (45, 1.093010) (50, 1.225397) (55, 1.363055) (60, 1.484405) (65, 1.616313) (70, 1.776084)
                   };\label{time}
    \end{axis}

    \begin{axis}[
    width=0.7\textwidth,
    legend pos=north west,
        axis x line=center,
        axis y line*=right, 
        ylabel={\#states },
        xlabel style={below right},
        ylabel style={above right},
        xmin=0,
        ymin=0,
        samples=50,
        scaled y ticks=false, 
        axis y line=right, 
        xtick=\empty, 
        yticklabel style={red}, 
    ]
    
    \addplot+ [
		smooth,
		] coordinates { (0,0) };

        \addplot+ [smooth, red,
        y filter/.code={\pgfmathparse{\pgfmathresult*10000000.}\pgfmathresult}
        ] coordinates {
			 (5, 1.038326) (10, 2.086217) (15, 3.134109) (20, 4.182000) (25, 5.229892) (30, 6.277783) (35, 7.325675) (40, 8.373566) (45, 9.421458) (50, 10.469349) (55, 11.517241) (60, 12.565132) (65, 13.613024) (70, 14.660915)
        };\label{states}
         \addlegendimage{/pgfplots/refstyle=time}\addlegendentry{total time [s]}
         \addlegendimage{/pgfplots/refstyle=states}\addlegendentry{\# states}
              \end{axis}
\end{tikzpicture}

%
%
%

\begin{tikzpicture}
    \begin{axis}[
    width=0.7\textwidth,
    legend pos=north west,
        axis x line=center,
        axis y line*=left, 
        xlabel={parameter $t$ },
        ylabel={total time [s]},
        xlabel style={above left},
        ylabel style={above},
        yticklabel style={blue},
        xmin=0,
        ymin=0,
        samples=50,
        scaled y ticks=false, 
        title={$N=8$},
    ]
        \addplot+ [smooth,blue,y filter/.code={\pgfmathparse{\pgfmathresult*1000.}\pgfmathresult}] coordinates {
			 (1, 0.292957) (2, 0.442314) (3, 0.603886) (4, 0.748466) (5, 0.896993) (6, 1.110267) (7, 1.245100) (8, 1.410968) (9, 1.624615) (10, 1.783757) (11, 1.888188) (12, 2.142774) (13, 2.385431)
                   };\label{time}
    \end{axis}

    \begin{axis}[
    width=0.7\textwidth,
    legend pos=north west,
        axis x line=center,
        axis y line*=right, 
        ylabel={\#states },
        xlabel style={below right},
        ylabel style={above right},
        xmin=0,
        ymin=0,
        samples=50,
        scaled y ticks=false, 
        axis y line=right, 
        xtick=\empty, 
        yticklabel style={red}, 
    ]
    
    \addplot+ [
		smooth,
		] coordinates { (0,0) };

        \addplot+ [smooth, red,
        y filter/.code={\pgfmathparse{\pgfmathresult*10000000.}\pgfmathresult}
        ] coordinates {
			 (1, 2.240651) (2, 3.574810) (3, 5.369109) (4, 7.232935) (5, 9.100341) (6, 10.967790) (7, 12.835238) (8, 14.702687) (9, 16.570135) (10, 18.437583) (11, 20.305032) (12, 22.172480) (13, 24.039929)
        };\label{states}
         \addlegendimage{/pgfplots/refstyle=time}\addlegendentry{total time [s]}
         \addlegendimage{/pgfplots/refstyle=states}\addlegendentry{\# states}
              \end{axis}
\end{tikzpicture}

\end{appendix}

\end{document}